\documentclass[acmsmall]{acmart}
\AtBeginDocument{%
  }

\setcopyright{cc}
\setcctype{by}

\acmJournal{PACMMOD}
\acmYear{2026} \acmVolume{4} 
\acmNumber{3 (SIGMOD)} \acmArticle{128}
\acmMonth{6} \acmDOI{10.1145/3802005}

\copyrightyear{2026}

\usepackage[ruled,vlined,linesnumbered]{algorithm2e} 
\usepackage{soul}
\usepackage{amsmath}
\usepackage{amsthm}
\usepackage{cleveref}
\usepackage{subcaption}
\usepackage{multirow}
\usepackage{enumitem}

\setitemize{leftmargin=10pt}
\newcommand{\cmark}{$\checkmark$}%
\newcommand{\xmark}{$\times$}%

\RestyleAlgo{ruled} 

\newtheorem{definition}{Definition}[section]
\newtheorem{theorem}{Theorem}[section]
\newtheorem{lemma}{Lemma}
\newtheorem{example}{Example}[section]
\newtheorem{rulethm}{Rule}
\newtheorem*{rulethm*}{Rule}

\begin{document}

\title{Accelerating Maximum Common Subgraph Computation by Exploiting Symmetries}


\author{Buddhi Kothalawala}
\email{buddhi.kothalawala@anu.edu.au}
\orcid{0000-0003-3023-2905}
\affiliation{%
  \department{School of Computing}
  \institution{The Australian National University}
  \city{Canberra}
  \state{The Australian Capital Territory}
  \country{Australia}
}

\author{Henning Koehler}
\email{h.koehler@massey.ac.nz}
\orcid{0000-0002-4688-920X}
\affiliation{%
  \department{School of Mathematical and Computational Sciences}
  \institution{Massey University}
  \city{Palmerston North}
  \country{New Zealand}
}

\author{Muhammad Farhan}
\email{muhammad.farhan@anu.edu.au}
\orcid{0000-0002-1239-2107}
\affiliation{
  \department{School of Computing}
  \institution{The Australian National University}
  \city{Canberra}
  \state{The Australian Capital Territory}
  \country{Australia}
}


\begin{abstract}
The Maximum Common Subgraph (MCS) problem plays a key role in many applications, including cheminformatics, bioinformatics, and pattern recognition, where it is used to identify the largest shared substructure between two graphs. Although symmetry exploitation is a powerful means of reducing search space in combinatorial optimization, its potential in MCS algorithms has remained largely underexplored due to the challenges of detecting and integrating symmetries effectively. Existing approaches, such as RRSplit, partially address symmetry through vertex-equivalence reasoning on the variable graph, but symmetries in the value graph remain unexploited. In this work, we introduce a complete dual-symmetry breaking framework that simultaneously handles symmetries in both variable and value graphs. Our method identifies and exploits modular symmetries based on local neighborhood structures, allowing the algorithm to prune isomorphic subtrees during search while rigorously preserving optimality. Extensive experiments on standard MCS benchmarks show that our approach substantially outperforms the state-of-the-art RRSplit algorithm, solving more instances with significant reductions in both computation time and search space. These results highlight the practical effectiveness of comprehensive symmetry-aware pruning for accelerating exact MCS computation.

\end{abstract}

\begin{CCSXML}
<ccs2012>
<concept>
<concept_id>10002950.10003624.10003625.10003630</concept_id>
<concept_desc>Mathematics of computing~Combinatorial optimization</concept_desc>
<concept_significance>500</concept_significance>
</concept>
<concept>
<concept_id>10002950.10003624.10003633.10010917</concept_id>
<concept_desc>Mathematics of computing~Graph algorithms</concept_desc>
<concept_significance>500</concept_significance>
</concept>
<concept>
<concept_id>10003752.10003809.10011254.10011256</concept_id>
<concept_desc>Theory of computation~Branch-and-bound</concept_desc>
<concept_significance>500</concept_significance>
</concept>
<concept>
<concept_id>10002950.10003624.10003625.10003628</concept_id>
<concept_desc>Mathematics of computing~Combinatorial algorithms</concept_desc>
<concept_significance>500</concept_significance>
</concept>
</ccs2012>
\end{CCSXML}

\ccsdesc[500]{Mathematics of computing~Combinatorial optimization}
\ccsdesc[500]{Mathematics of computing~Graph algorithms}
\ccsdesc[500]{Theory of computation~Branch-and-bound}
\ccsdesc[500]{Mathematics of computing~Combinatorial algorithms}

\keywords{Combinatorial Search, Branch and Bound, Graph Theory}

\received{October 2025}
\received[revised]{January 2026}
\received[accepted]{February 2026}

\maketitle

\section{Introduction}

\begin{figure}[t]
\centering
\includegraphics{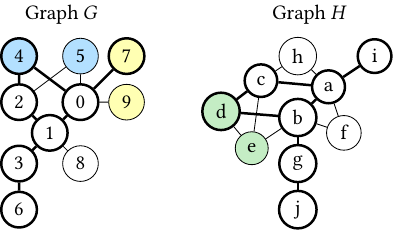}
\caption{Two example graphs $G$ and $H$, with symmetric vertex sets $\{4, 5\}$, $\{7, 9\}$, and $\{d, e\}$. The pairs $(4, 5)$ and $(7, 9)$ exhibit negative symmetry, while $(d, e)$ exhibit positive symmetry. The maximum common induced subgraph (MCIS) is shown in bold, with the vertex mapping $M = \{(0, a), (1, b), (2, d), (3, g), (4, c), (6, j), (7, i)\}$.}
\label{fig:graphs}
\Description{MCIS example graphs}
\vspace{-0.2cm}
\end{figure}

Graphs are a powerful tool for representing structured data and are widely used in diverse application domains. For instance, computer networks are modeled with devices as vertices and communication links as edges; chemical compounds are represented as graphs where atoms and bonds correspond to vertices and edges \citep{ehrlich2011chemanalisis, kawabata2014chemapplication}; and social networks model individuals and their interactions in a similar manner \citep{lahiri2008mining-network}. Identifying structural similarities in graphs is critical for tasks such as detecting anomalies in network traffic, classifying chemical compounds, and analyzing communities in social networks \citep{vijayalakshmi2011performance-network}.

A well-established measure of similarity between two graphs is the \emph{Maximum Common Induced Subgraph} (MCIS). The MCIS problem asks for the largest induced subgraphs of two input graphs that are isomorphic, meaning there exists a one-to-one correspondence between their vertices that preserves adjacency. \Cref{fig:graphs} illustrates two example graphs, \( G \) and \( H \), with their maximum common induced subgraph shown in bold. Since MCIS is NP-hard \citep{akutsu2012complexity}, finding exact solutions is computationally challenging, particularly for large and dense graphs.

To tackle the MCIS problem, a variety of algorithmic strategies have been explored, including constraint programming approaches \citep{vismara2008cpmethod, ndiaye2011cpmethod}, clique-based formulations \citep{levi1973note, koch2001enumerating, balas1986finding}, and backtracking-based search algorithms \citep{mcgregor1982backtracksa, mccreesh2017partitioning}. Among the most effective exact methods is the McSplit algorithm, a branch-and-bound (BnB) approach that systematically explores the space of partial vertex mappings using a depth-first search strategy \citep{mccreesh2017partitioning}. McSplit partitions candidate vertex pairs into \emph{bidomains}, each representing subsets of vertices from \( G \) and \( H \) that remain feasible for matching under the current mapping. It then applies a bounding rule to estimate the largest possible extension of the current mapping, allowing early pruning of branches that cannot lead to an optimal solution. Building on McSplit, the recently proposed RRSplit algorithm \citep{yu2025rrsplit} improves efficiency by introducing vertex-equivalence-based pruning and a refined upper bound that better reflects local structural redundancy. By grouping structurally equivalent vertices into equivalence classes, RRSplit avoids redundant exploration of symmetric subtrees and achieves notable performance gains. However, its treatment of symmetry remains limited: the equivalence reasoning applies only within one input graph and does not account for the symmetries that arise from both input graphs. 
For example, in \cref{fig:graphs}, the vertices in graph $G$ that share identical neighborhoods are $\{4, 5\}$ and $\{7, 9\}$, while in $H$, the corresponding pair is $\{d, e\}$.
For instance, in a common subgraph containing at least one of vertices $4$ and $5$, swapping them produces a symmetric solution.
RRSplit detects and avoids revisiting such redundant cases.
However, swapping vertices $\{d, e\}$ in $H$ also produces symmetric solutions, provided at least one of them is contained in the common subgraph.
Such redundant cases caused by equivalent vertices in $H$ are missed by RRSplit, whereas our proposed method accounts for symmetries in both input graphs, $G$ and $H$.
While this seems straight-forward conceptually, these symmetries can interact, making correctness and efficient detection/avoidance non-trivial issues.

In the broader context of constraint programming, the two input graphs in an MCIS instance are often referred to as the \emph{variable graph} and the \emph{value graph}. Vertices in the variable graph are treated as variables to be assigned, while vertices in the value graph, along with a special symbol representing unmatched vertices, serve as possible values. The objective of an MCIS algorithm is to assign values to variables so that the induced subgraphs of the variable and value graphs are isomorphic and no other mapping includes more vertices. This perspective highlights the asymmetric yet interdependent roles of the two graphs and motivates the need for symmetry breaking strategies that operate coherently across both.

To address these limitations, we propose a novel algorithm called \emph{SymSplit}, which systematically eliminates redundant exploration by breaking symmetries arising from similar neighborhoods in both the variable and value graphs. At the core of our approach lies a dual-symmetry breaking framework that preserves the exactness of the algorithm while substantially improving efficiency. SymSplit identifies symmetric structures in the input graphs and prunes equivalent search branches that correspond to the same subgraph mappings. By avoiding these unnecessary explorations, the proposed method significantly reduces computation time without compromising completeness or correctness.

\medskip
\noindent\textbf{Contributions.~}The main contributions of this work are summarized as follows:
\begin{itemize}[itemsep=5pt]
    \item We present a principled framework for detecting and eliminating symmetric branches during the search by identifying structurally equivalent vertices with identical neighborhoods in both input graphs. The framework unifies two complementary techniques: variable symmetry breaking and value symmetry breaking that together provide more complete symmetry elimination for the MCIS problem.
    
    \item We design an efficient and lightweight method for detecting relevant graph symmetries that minimizes computational overhead. Our implementation ensures that the symmetry detection process scales effectively to large graph instances, making the approach practical for real-world applications.
    
    \item We formally establish the correctness and completeness of the proposed framework, proving that variable and value symmetry breaking can be applied simultaneously without interference.
    Our proofs extend to directed graphs and graphs with self-loops. We also analyze the computational complexity of symmetry identification and computation.
    
    \item We perform extensive experiments on standard MCIS benchmark datasets to assess the effectiveness of our approach. The results show that \emph{SymSplit} consistently solves more problem instances in less time compared to state-of-the-art algorithms, demonstrating the practical benefits of comprehensive symmetry breaking in exact MCIS computation.
\end{itemize}

\section{State-of-the-art Solutions}\label{sec:prelim}
Let \( G = (V(G), E(G)) \) be a simple and unweighted graph, where \( V(G) \) and \( E(G) \) denote the vertex and edge sets, respectively. For any subset \( V' \subseteq V(G) \), the \emph{induced subgraph} \( G[V'] \) is defined as the graph with vertex set \( V' \) and edge set \( \{(v_i, v_j) \in E(G) \mid v_i, v_j \in V'\} \). Two graphs \( G \) and \( H \) are said to be \emph{isomorphic}, denoted \( G \cong H \), if there exists a bijection \( \varphi : V(G) \to V(H) \) such that for all \( v_i, v_j \in V(G) \), 
\((v_i, v_j) \in E(G)\) if and only if \((\varphi(v_i), \varphi(v_j)) \in E(H)\).

\begin{definition}[Maximum Common Induced Subgraph]
The \emph{maximum common induced subgraph} (MCIS) problem for two graphs \( G \) and \( H \) is to find induced subgraphs \( G[V'] \subseteq G \) and \( H[U'] \subseteq H \) such that \( G[V'] \cong H[U'] \) and \( |V'| = |U'| \) is maximized.
\end{definition}

The MCIS problem is NP-hard and serves as a fundamental model for identifying structural similarity between graphs. Most exact approaches employ a \emph{branch-and-bound} strategy that incrementally builds vertex correspondences while pruning infeasible or suboptimal branches using upper bounds. Two representative algorithms following this framework are \emph{McSplit} and its refinement \emph{RRSplit}, which are discussed next.

\subsection{McSplit Algorithm}
The \emph{McSplit algorithm}~\cite{mccreesh2017partitioning} applies the branch-and-bound paradigm to the MCIS problem. Given two graphs $G$ and $H$, it searches for a \emph{mapping} of maximum cardinality that represents the largest common induced subgraph between them.

A mapping between $G$ and $H$ is a set of vertex pairs
\[
M = \{(v_1, u_1), (v_2, u_2), \dots (v_i, u_i) \dots, (v_k, u_k)\},
\]
where each vertex in $v_i \in V(G)$ and $u_i \in V(H)$ appears in at most one pair. Each pair $(v_i, u_i) \in M$ is called a \emph{mapped vertex pair}. A mapping of maximum cardinality corresponds directly to a maximum common induced subgraph of $G$ and $H$. McSplit performs a depth-first search (DFS) beginning with the empty mapping and incrementally extends it by adding one compatible vertex pair at a time, continuing until no further extensions are possible.

At the core of McSplit lies a \emph{bidomain partitioning} mechanism that organizes unmatched vertices according to their structural compatibility. This compatibility is determined using an \emph{M-labelling} scheme that records the adjacency of each unmatched vertex to the vertices already in the current mapping. Specifically, given a partial mapping 
$M = \{(v_1, u_1), \dots, (v_k, u_k)\}$ 
with corresponding vertex sets $V' = \{v_1, \dots, v_k\}$ and $U' = \{u_1, \dots, u_k\}$, the labelling function
\[
\ell : V(G) \cup V(H) \to \{0,1\}^k
\]
assigns a binary vector to each unmatched vertex $w \in (V(G) \setminus V') \cup (V(H) \setminus U')$, where the $i$th bit of $\ell(w)$ equals $1$ if $w$ is adjacent to $v_i$ in $G$ or to $u_i$ in $H$, and $0$ otherwise. Two vertices receive identical labels if they have the same adjacency relationships with respect to the current mapping.

Using these labels, McSplit partitions the unmatched vertices into a collection of bidomains 
\[
P_M = \{\langle V_1, U_1 \rangle, \dots, \langle V_t, U_t \rangle\},
\]
such that for every $v \in V_i$ and $u \in U_j$, $i = j$ if and only if $\ell(v) = \ell(u)$. Vertices in the same bidomain pair share similar structural roles, while different bidomains correspond to distinct adjacency patterns.

The algorithm begins with a single bidomain $P_M = \{\langle V(G), V(H) \rangle\}$ and recursively refines it through branching. At each step, a bidomain $\langle V_l, U_l \rangle$ is selected, and a vertex $v \in V_l$ is chosen to be matched against each vertex $u \in U_l$. Each resulting mapping $(M \cup \{(v,u)\})$ defines a new search node. After adding $(v,u)$, McSplit relabels the remaining unmatched vertices based on their connectivity to $v$ ($N_G(v)$ neighborhood of $v$) and $u$ ($N_H(u)$ neighborhood of $u$). This refinement splits the current bidomain $\langle V_l, U_l \rangle$ into two smaller bidomains:
\[
\langle V_l \cap N_G(v),\, U_l \cap N_H(u) \rangle
\quad \text{and} \quad
\langle V_l \setminus N_G(v ),\, U_l \setminus N_H(u) \rangle,
\]
representing vertices adjacent and non-adjacent to the selected pair, respectively. The process continues recursively. Once $v$ has been matched with all vertices in $U_l$, it is removed from consideration by treating it as paired with a special symbol $\bot$. This algorithm follows a tree search, which we can formally define the search tree as follows.

\vspace{0.15cm}
\noindent\textbf{Search Tree.~}A \emph{search tree} $T$ for graphs $G$ and $H$ is a rooted tree where each node represents a partial mapping $M$ with bidomains $P_M$. The root represents the empty mapping $M_0 = \emptyset$ with $P_{M_0} = \{\langle V(G), V(H) \rangle\}$. Each child extends its parent by adding $(v_i, u_i)$ to map $v_i \in V(G)$ to $u_i \in V(H)$, or $(v_i, \bot)$ to leave $v_i$ unmapped. Bidomains are refined based on adjacency and label constraints, and all mappings induce isomorphic subgraphs. Nodes with no valid extensions are leaves. A \emph{subtree} $T_M$ rooted at a node with mapping $M$ consists of all descendant nodes reachable by extending $M$. A \emph{branch} is a path from the root to a leaf, representing a complete sequence of mapping decisions.

Throughout the search, McSplit maintains the best mapping found so far, referred to as the \emph{incumbent}. To avoid exploring unpromising branches, it applies a bounding rule that estimates the largest possible mapping size achievable from the current state. The upper bound is computed as
\begin{equation}\label{Eqn: McSplitBound}
\mathrm{UB}(M, P_M) = |M| + \sum_{\langle V_l, U_l \rangle \in P_M} \min(|V_l|, |U_l|).
\end{equation}
If $\mathrm{UB}(M, P_M) \le |M^*|$, where $M^*$ is the incumbent, the current branch is pruned; otherwise, the search proceeds recursively. This bound guarantees correctness while significantly reducing unnecessary exploration.

\begin{example}
\Cref{fig:search-tree} illustrates the McSplit search process partially. The root node begins with all vertices of $G$ and $H$ in a single bidomain and an empty mapping. Suppose the algorithm first selects the pair $(0,a)$, where vertex $0$ in $G$ is adjacent to $\{1,4,5,7,9\}$ and vertex $a$ in $H$ is adjacent to $\{b,c,f,h,i\}$. These adjacent vertices form a new bidomain, while the remaining vertices form another, resulting in a refined partition $P_{M_1}$ for the mapping $M_1 = \{(0,a)\}$. Next, mapping $(3,d)$ is chosen, and $P_{M_1}$ is further refined based on adjacency to $3$ in $G$ and $d$ in $H$. In particular, the bidomain $\langle \{1,4,5,7,9\}, \{b,c,f,h,i\} \rangle$ is split into $\langle \{1\}, \{b,c\} \rangle$ for vertices adjacent to both $3$ and $d$, and $\langle \{4,5,7,9\}, \{f,h,i\} \rangle$ for those not connected. This refinement proceeds recursively until no further extensions are possible, at which point the search branch is terminated.
\end{example}

\subsection{RRSplit Algorithm}
The RRSplit algorithm~\cite{yu2025rrsplit} builds upon McSplit by addressing two key limitations of its search process: redundant exploration caused by structurally equivalent vertices and an overly loose upper bound that weakens pruning efficiency. To overcome these issues, RRSplit introduces three interdependent mechanisms that together create a redundancy-reduced branch-and-bound framework.

RRSplit operates on the same principle as McSplit, maintaining a partial mapping $M$ where each vertex in $G$ and $H$ appears in at most one pair. The key difference lies in how RRSplit exploits vertex equivalence to avoid redundant exploration and computes tighter upper bounds for pruning. Two vertices are said to be \emph{structurally equivalent} if they share the same neighborhood: 
\[
v_1 \sim v_2 \quad \text{iff} \quad N_G(v_1) = N_G(v_2),
\]
Such vertices play identical structural roles in any induced subgraph isomorphism. RRSplit partitions the vertices of $G$ into equivalence classes, where each class $\Psi(v)$ contains all vertices structurally equivalent to $v$. During the search, when a vertex $v$ in $G$ is selected for matching, only one configuration of its equivalence class is explored, while the remaining symmetric configurations are pruned. For example, if $v_1, v_2 \in \Psi_G(v)$ and $u_1, u_2 \in V(H)$, the mappings
\[
M_1 = \{(v_1, u_1), (v_2, u_2)\} \quad \text{and} \quad M_2 = \{(v_1, u_2), (v_2, u_1)\},
\]
yield search subtrees with isomorphic subgraphs, and RRSplit explores only one of them. The same logic applies when a vertex is excluded from the mapping: if one vertex of an equivalence class is removed, other equivalent vertices need not be considered separately. This vertex-equivalence-based reduction removes redundant subtrees that McSplit would otherwise explore, particularly in graphs with repeated local structures or regular patterns.

RRSplit further strengthens the McSplit upper bound (\ref{Eqn: McSplitBound}) by incorporating vertex equivalence information to ignore matches that would be symmetry pruned.
However, the practical gains from this bound refinement are modest, as discussed in Section~\ref{sec:ablation}.

\begin{example}
In the RRSplit algorithm, vertices $\{4, 5\} \subseteq V(G)$ and $\{7, 9\} \subseteq V(G)$ in Figure~\ref{fig:graphs} are identified as structurally equivalent. Consider the two mappings $M_8 = \{(0, b), (2, c), (7, f), (9, \bot)\}$ and $M_{10} = \{(0, b), (2, c), (7, \bot), (9, f)\}$ shown under variable symmetry in Figure~\ref{fig:search-tree}. When RRSplit detects that $v_1, v_2 \in \Psi_G(v_1)$ belong to equivalent vertex sets and $u_1, u_2 \in V(H)$, it explores only one representative mapping $M_1 = \{(v_1, u_1), (v_2, u_2)\}$ and prunes the symmetric counterpart $M_2 = \{(v_1, u_2), (v_2, u_1)\}$. Here, $v_1 = 7$, $v_2 = 9$, $u_1 = f$, and $u_2 = \bot$. Consequently, RRSplit avoids exploring the branch corresponding to $M_{10}$ because it represents a symmetric configuration of $M_8$. 
\end{example}

\subsection{Discussion}
Although McSplit and RRSplit show significant progress in solving the MCIS problem, both remain limited by incomplete handling of structural symmetry. McSplit does not recognize equivalent structural configurations within or across the input graphs at all, often exploring multiple branches that lead to equivalent subgraphs.

RRSplit improves upon McSplit by introducing vertex-equiva\-lence-based pruning. However, its symmetry handling is inherently partial: it focuses only on equivalences within the variable graph $G$ but ignores equivalences within $H$.

Our proposed dual-symmetry breaking framework directly addresses this gap by jointly eliminating symmetries in both graphs $G$ and $H$ while ensuring to return the exact MCIS, thus enabling a more efficient and fully symmetry-aware search process.

It is important to note that symmetries arising from equivalences in \( H \) are conceptually similar to those in \( G \); however, eliminating them presents significant challenges. These difficulties stem from the asymmetric roles that \( G \) and \( H \) assume within the branch-and-bound framework, as well as from the need to address potential conflicts when applying variable and value symmetry-breaking techniques simultaneously.

\begin{figure}[t]
\centering
\resizebox{\textwidth}{!}{\includegraphics{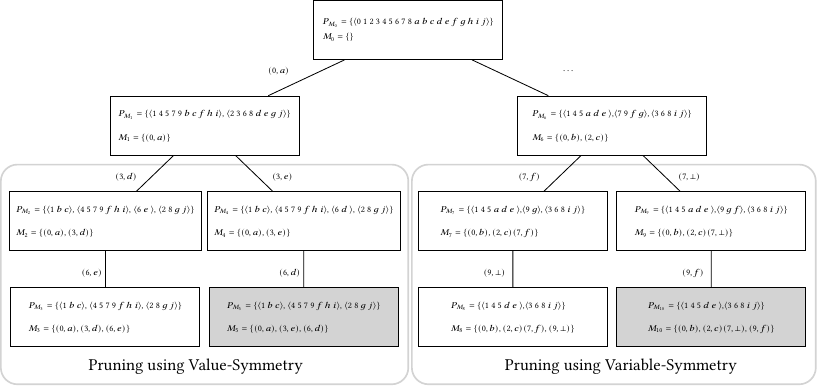}}
\caption{An illustration of a partial MCIS search tree demonstrating the proposed symmetry breaking rule. Grey nodes represent subproblems that are isomorphic to their left siblings and are pruned using either value-symmetry or variable-symmetry. The left box shows pruning based on H-symmetry, while the right box shows pruning based on variable-symmetry.}
\label{fig:search-tree}
\Description{Search Tree}
\end{figure}

\section{Dual-Symmetry Breaking Framework}\label{sec:approach}

Symmetry breaking has been widely studied in combinatorial search problems related to MCIS, particularly within the constraint programming framework~\cite{fahle2001symmetry, zampelli2007symmetry}. There, the two input graphs play distinct roles: the graph \( G \) is treated as the \emph{variable graph}, whose vertices are successively selected for mapping, while \( H \) serves as the \emph{value graph}, whose vertices (and the special symbol~\(\bot\)) represent possible assignments to these variables. In this setting, symmetry breaking aims to prevent redundant exploration of equivalent search branches that yield isomorphic or structurally identical solutions.

An effective symmetry-breaking mechanism must address symmetries arising from both graphs and ensure that they can be handled concurrently. Specifically, three components are essential:  
(1) \emph{variable symmetry breaking}, which eliminates redundant mappings originating from symmetric vertices in the variable graph \( G \);  
(2) \emph{value symmetry breaking}, which removes equivalent mappings induced by symmetries in the value graph \( H \); and  
(3) \emph{compatibility}, which guarantees that at least one symmetrical branch is preserved by allowing variable and value symmetry breakings to be applied together.
A framework satisfying all three criteria constitutes a \emph{complete symmetry-breaking} solution.

Note that compatibility of variable and value symmetry breaking rules is not automatic. E.g. for two mappings $M_1,M_2$ which are related through both value and variable symmetry, one rule may eliminate $M_1$ and assume that $M_2$ is preserved, while the other eliminates $M_2$ assuming $M_1$ is kept.

Building on this principle, we introduce a \emph{dual-symmetry breaking framework} for the MCIS problem that simultaneously applies symmetry-breaking constraints to both \( G \) and \( H \). The proposed framework integrates two complementary mechanisms:
\begin{itemize}
    \item \textbf{Variable Symmetry Breaking:} Exploits structural equivalences within the variable graph \( G \) to avoid exploring isomorphic subtrees caused by symmetric vertex orderings.
    \item \textbf{Value Symmetry Breaking:} Leverages symmetries in the value graph \( H \) to prevent redundant mappings arising from interchangeable vertices or identical adjacency patterns.
\end{itemize}
Any proposed MCIS solution must satisfy certain criteria to ensure it solves the MCIS problem correctly. We define the properties that any MCIS algorithm should adhere to as follows:
\begin{definition}[Properties of MCIS Algorithms]
An algorithm for the MCIS problem must satisfy the following properties:
\begin{itemize}
    \item \textbf{Correctness:} Only mappings between isomorphic induced subgraphs of \( G \) and \( H \) are considered valid.
    \item \textbf{Completeness:} At least one mapping of maximal cardinality is guaranteed to be found.
\end{itemize}
\end{definition}

The rest of this section is structured as follows. First, we discuss our symmetry detection technique, which we call \emph{modular symmetry}. We then individually discuss variable and value symmetry. Following this, we explain how these symmetries can be applied simultaneously to the MCIS problem without violating completeness. Next, we present the proposed algorithm and demonstrate how the method extends to directed graphs and graphs with self-loops. Finally, we analyze the computational complexity of the proposed symmetry breaking technique.

\subsection{Modular Symmetry}
To identify structural symmetries in the input graphs $G$ and $H$, we focus on detecting symmetric vertices by analyzing the similarity of their neighborhoods. The core idea is that vertices with identical neighborhoods play analogous structural roles in the graph and exhibit similar behavior under the McSplit labeling function. By grouping such vertices into symmetric classes, we can uncover meaningful structural patterns and reduce computational redundancy during the search process.

First, we define two types of neighborhoods that form the basis of our symmetry.

\begin{definition}[Neighborhood]
\label{def:neighborhood}
For any vertex $v \in V(G)$, the \emph{negative} and \emph{positive neighborhoods} of $v$ can be defined as follows:
\begin{itemize}
    \item \emph{Negative neighborhood}: $N^-(v) = \{ u \in V \mid (u, v) \in E \}$
    \item \emph{Positive neighborhood}: $N^+(v) = N^-(v) \cup \{v\}$
\end{itemize}
\end{definition}

\begin{definition}[Modular Symmetry]
\label{def:modular-sym}
For a graph \( G \), two vertices \( u, v \in V(G) \) are said to be \emph{modular symmetric}, denoted $u \equiv v$, if they share the same negative neighborhood or the same positive neighborhood. Specifically, $u$ and $v$ are said to be negatively symmetric if $N^-(u) = N^-(v)$, and positively symmetric if $N^+(u) = N^+(v)$.
\end{definition}

Modular symmetry captures both adjacency-based and self-inclusive equivalences. It generalizes the notion of structural equivalence used in RRSplit~\cite{yu2025rrsplit}. By incorporating both neighborhood types, our formulation exposes additional symmetric patterns that would otherwise remain undetected.

An important property of modular symmetry is that positive and negative symmetries are mutually exclusive.

\begin{theorem}[Mutual Exclusiveness]
\label{thm:mutual-exclusiveness}
A vertex $v$ cannot participate in both negative and positive symmetry relationships.
\end{theorem}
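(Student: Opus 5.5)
Suppose $v$ is negatively symmetric with some $u\neq v$, i.e.\ $N^-(u)=N^-(v)$, and positively symmetric with some $w\neq v$, i.e.\ $N^+(w)=N^+(v)$. We derive a contradiction. The argument rests on two elementary observations, valid for simple graphs without self-loops (the setting of this section).

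The first observation is that negative symmetry forces non-adjacency. Since there are no self-loops, $v\notin N^-(v)=N^-(u)$, so $u$ and $v$ are not adjacent.

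The second observation is that positive symmetry forces adjacency. From $w\in N^+(w)=N^+(v)=N^-(v)\cup\{v\}$ and $w\neq v$, we get $w\in N^-(v)$, so $w$ and $v$ are adjacent.

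Now combine the two relations at the vertices $w$ and $u$.

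\textbf{Membership of $w$.} Since $w\in N^-(v)=N^-(u)$, the vertex $w$ is adjacent to $u$. In particular $w\neq u$, because $u$ and $v$ are not adjacent while $w$ and $v$ are.

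\textbf{Membership of $u$.} Adjacency gives $u\in N^-(w)\subseteq N^+(w)=N^+(v)$. As $u\neq v$, this means $u\in N^-(v)$, so $u$ is adjacent to $v$. This contradicts the first observation, and hence no vertex can be both negatively and positively symmetric.

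The only delicate point is ensuring the vertices are genuinely distinct, so that the witnesses $u$ and $w$ are not trivial self-relations. This is handled by requiring $u\neq v$ and $w\neq v$ in the definition of participating in a relationship. The argument also relies on the absence of self-loops, so for later extensions to graphs with self-loops the statement would have to be revisited.
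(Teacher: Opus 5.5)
Your proof is correct and follows essentially the same route as the paper's. Both arguments use that positive symmetry forces adjacency and negative symmetry forces non-adjacency. Both then push the positive witness's adjacency to $v$ through the equal negative neighborhoods and back through the equal positive neighborhoods to reach a contradiction. The paper only swaps the names of the witnesses and phrases the contradiction as the negative witness lying in $N^+$ of the positive witness but not in $N^+(v)$. Your explicit treatment of distinctness and of the no-self-loop fact $v\notin N^-(v)$ makes precise two steps the paper leaves implicit.
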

\begin{proof}
Suppose \( v, u, w \in V(G) \), and assume \( v \) and \( u \) are positively symmetric, i.e., \( N^+(v) = N^+(u) \), which implies in particular that \( (v, u) \in E(G) \). Now suppose \( v \) and \( w \) are negatively symmetric, i.e., \( N^-(v) = N^-(w) \), and by definition of negative symmetry, \( (v, w) \notin E(G) \).
Since \( v \) and \( w \) have identical negative neighborhoods, and \( v \) is adjacent to \( u \), it follows that \( w \) must also be adjacent to \( u \). However, as \( (v, w) \notin E(G) \), it follows that \( v \) is not adjacent to \( w \), but \( u \) is adjacent to \( w \). This contradicts
\( N^+(v) = N^+(u) \).
\end{proof}

Modular symmetry thus induces equivalence classes over the vertex set $V(G)$. Each class contains vertices that are pairwise symmetric under one of the two criteria. A positive symmetry class always forms a \emph{clique}, meaning every pair of vertices in the class is connected by an edge, while each negative symmetry class forms an \emph{anti-clique}, also known as an independent set (i.e., a vertex set without any edge between them).

\begin{example}
\label{ex:symmetry-example}
Consider graphs $G$ and $H$ in \cref{fig:graphs}. In graph $G$, the vertex pairs $\{4, 5\}$ are structurally symmetric as they share the same set of neighbors: $\{0, 2\}$. Hence, they belong to the same symmetry class. Similarly, vertices $7$ and $9$ in graph $G$, and vertices $d$ and $e$ in graph $H$, are symmetric based on their neighborhood structures. Especially, $\{d, e\}$ pair is symmetric under positive modular symmetry.
\end{example}

We note that modular symmetry can equivalently be viewed through an automorphism perspective: two vertices \( u, v \) are modular symmetric if the permutation swapping \( u \) and \( v \) is an automorphism of \( G \). Under this interpretation, transitivity follows immediately, although constructing equivalence classes directly is less straightforward.

As a side note for readers familiar with modules and modular decompositions, our notion of symmetry is indeed related. The non-trivial equivalence classes induced by modular symmetry are precisely the non-trivial strong modules that are cliques or anti-cliques.

\subsection{Variable Symmetry Breaking}
We now describe symmetry breaking applied to the variable graph \( G \). We call two mappings \emph{symmetrical} iff they can be converted into each other by swapping out modular symmetrical vertices.

\begin{rulethm}[Variable Symmetry Breaking]\label{rule:var-sym}
    Consider a search branch $b$ containing $(v_1, u_2), (v_2, u_1)$ where $v_1,v_2\in G$ and $u_1,u_2\in H\cup\{\bot\}$ with $v_1 \equiv v_2$ and $v_1$ appearing before $v_2$ in $b$. If $u_1 < u_2$ w.r.t. some arbitrary but fixed ordering, we prune $b$.
\end{rulethm}

\noindent\textbf{Justification.} Since $v_1$ and $v_2$ have identical neighborhoods, swapping their assignments in any mapping preserves the isomorphism property while creating a more consistently ordered mapping.

We say that edges $(v_1,u_2),(v_2,u_1)$ are in \emph{$G$-conflict} on a branch mapping $v_1$ prior to $v_2$ iff $v_1\equiv v_2$ and $u_1 < u_2$. Observe that $G$-symmetry prunes exactly the branches containing $G$-conflicts.
\begin{theorem}\label{th:var-sym}
    Variable symmetry breaking does not violate the correctness or completeness of the MCIS solution.
\end{theorem}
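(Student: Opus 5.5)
Correctness is immediate: Rule~\ref{rule:var-sym} only removes branches and never changes the mappings produced on the branches that survive. Every mapping the pruned search reports is still one the unpruned McSplit search would report, and all of those induce isomorphic subgraphs. The substance is completeness. I will show that every maximum mapping $M$ (with all remaining variables assigned $\bot$, so that $M$ is a total assignment $f$ of $V(G)$ into $V(H)\cup\{\bot\}$) can be transformed into a mapping $M'$ of the same cardinality that is valid and whose branch contains no $G$-conflict.

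\textbf{Setup.} I fix the variable order in which the search assigns vertices along the branch for $M$. For completeness at the level of the final assignment, I treat this order as a fixed total order $\prec$ on each symmetry class. In McSplit the order may depend on the branch, so this needs care; see the obstacle below.

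\textbf{Key lemma.} If $v_1\equiv v_2$, the transposition $\sigma=(v_1\,v_2)$ is an automorphism of $G$. This is the automorphism view noted after Definition~\ref{def:modular-sym}, and it holds for both negative and positive symmetry. Hence $f\circ\sigma$ is again a valid mapping: the images of adjacent pairs are preserved, because $\sigma$ preserves adjacency and fixes every other vertex. It also has the same number of non-$\bot$ values, so the same cardinality.

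\textbf{Sorting argument.} Within each equivalence class $C=\{c_1\prec\dots\prec c_m\}$, permute the values $f(c_1),\dots,f(c_m)$ so that they become nondecreasing under the fixed order on $V(H)\cup\{\bot\}$. Each transposition preserves validity by the lemma, and any permutation is a product of transpositions within $C$, so the result $f'$ is valid and still maximum. By construction no pair $v_1\prec v_2$ in a common class has $f'(v_2)<f'(v_1)$, so $f'$ has no $G$-conflict. It follows that the branch of the search realising $f'$ is never pruned by Rule~\ref{rule:var-sym}. That branch exists because plain McSplit is complete, and the upper-bound pruning still only cuts nodes that cannot beat the incumbent. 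Equivalently, I can run a bubble-sort/potential argument: the number of conflicting pairs strictly decreases with each swap.

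\textbf{Main obstacle.} The order of variable selection may change after swapping values, since McSplit picks the branching vertex from bidomains that depend on the current mapping. To handle this, I will argue inductively along the search tree rather than on final assignments. At the node where the first vertex $v$ of a class is branched on, I show there is a maximum extension consistent with the current partial mapping that assigns to $v$ the smallest value among those its class receives. The automorphism $\sigma$ fixes the already-mapped vertices and swaps $v$ only with an unmapped, equivalent vertex. Applying it therefore keeps consistency with the current node and does not introduce conflicts among already-assigned pairs. I then induct on depth, and this maintains the invariant that some optimal extension survives at every node.
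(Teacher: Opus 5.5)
Your final argument, the descent in the last paragraph, is sound, and it takes a genuinely different route from the paper. The paper argues backwards from a pruned branch $b$. Swapping the values of a $G$-conflict $(v_1,u_2),(v_2,u_1)$ gives a hypothetical branch $b'$ that agrees with $b$ up to the node of $v_1$ and assigns the smaller $u_1$ there. So $b'$ is strictly smaller in the value-lexicographical order of Definition~\ref{def:vl-order}, however the variables are ordered afterwards, and by finiteness repeated swapping reaches an unpruned branch. That is exactly what your sorting paragraph lacks. The number of conflicting pairs is not a valid potential, because the branch realising the swapped mapping may order the variables differently. The first position at which the value sequence changes, by contrast, is unaffected. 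Your descent builds a conflict-free branch forwards instead. This confronts the dynamic variable order directly. It also makes explicit why bound pruning is harmless: $\mathrm{UB}\ge|F|$ for every extension $F$, so a cut on your path means the incumbent is already optimal. The paper leaves this point implicit.

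For the descent to be rigorous, the invariant ``some optimal extension survives'' is too weak. You need to keep an optimal extension $F$ of the current node in which, within each class, every assigned vertex carries a value no larger (under $<_H$, with $\bot$ last) than every unassigned one. Then apply the swap whenever \emph{any} class member $v$ is branched on, exchanging $v$ with the unassigned member of least $F$-value. Swapping only at the first member, as you describe it, fails for classes of size three or more. You should also note that $F(v)\in U_l\cup\{\bot\}$, because $F$ preserves labels relative to the current mapping.

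What the paper's global-minimality argument buys is that it carries over unchanged to Theorem~\ref{thm:dual-symmetry}; your local swaps do not. Take $G$ to be two isolated vertices $c\equiv v$, and $H$ an edge $xw$ plus isolated $y\equiv z$, ordered $y<x<z<w$. Descending along $F=\{(c,x),(v,z)\}$ commits to $(c,x)$. After that, the value rule forbids $(v,z)$ and the variable rule forbids $(v,y)$. So in the combined setting the starting extension must be chosen globally, for example as the value-lexicographically least one.
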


\begin{proof}
Let the conflicting edges be labeled $(v_1,u_2),(v_2,u_1)$ where possibly $u_2=\bot$, for a branch $b$ with mapping $M$.
By performing a swap of $(v_1,u_2),(v_2,u_1)$ to $(v_1,u_1),(v_2,u_2)$ we can create a symmetrical mapping $M'$ with corresponding hypothetical search branch $b'$.
We call $b'$ hypothetical since it may have been pruned as well, due to a different $G$-conflict.
However, since search trees branch on all possible mappings of $v$ at the same node, $b'$ must have been considered before $b$, assuming values are consider in the same arbitrary but fixed order used in Rule~\ref{rule:var-sym} (more on this assumption later).
Thus repeated swapping of $G$-conflicting edges will lead to earlier and earlier hypothetical branches, eventually resulting in an actual branch.
This shows completeness, correctness is trivial.
\end{proof}

We note that our approach to variable symmetry breaking is similar to the \emph{vertex-equivalence-based reduction} used by RRSplit.
One of the key differences is that RRSplit does not make the assumption about a fixed processing order of $H$-vertices used in the proof of Theorem~\ref{th:var-sym}, and instead ensures that $b'$ would have been considered earlier by tracking actual processing order.

However, this assumption is not necessary to show that the sequence of hypothetical branches eventually terminates.
Instead, consider the following (quasi-)order between branches:

\begin{definition}[Value-Lexicographical Order]\label{def:vl-order}
Let $<_H$ be an arbitrary but fixed order on $V(H)\cup\{\bot\}$.
The \emph{value-lexicographical} order $\prec$ compares branches based on the $<_H$-based lexicographical order of their sequences of $H$-vertices or $\bot$ values.
\end{definition}

The swaps performed in the proof of Theorem~\ref{th:var-sym} always result in a branch which is strictly smaller w.r.t. value-lexicographical order, thus ensuring termination.
Apart from being of theoretical interest, this simplifies implementation by removing the need to either impose a fixed processing order of $H$-vertices or tracking it.

\begin{example}[Variable Symmetry Breaking]
Consider the right side of the tree shown in Figure~\ref{fig:search-tree}, the mappings \((7, f)\) with \(9\) unmatched, and \((9, f)\) with \(7\) unmatched, leading to isomorphic nodes \(M_8\) and \(M_{10}\). Here, the we identify that vertex \(9\) is symmetric to an already visited vertex \(7\), which has been mapped earlier. As a result, it avoids exploring the second branch using $G$-symmetry pruning.
\end{example}

\subsection{Value Symmetry Breaking}
Similarly, vertices in \( H \) may have identical structural properties, leading to redundant choices when selecting mapping targets. We can address this in a way analogous to variable symmetry.

\begin{rulethm*}[Value Symmetry Breaking -- Basic]\label{rule:val-sym}
    Consider a search branch $b$ containing $(v_1, u_2), (v_2, u_1)$ where $v_1,v_2\in G$ and $u_1,u_2\in H$ with $u_1 \equiv u_2$ and $v_1$ appearing before $v_2$ in $b$. If $u_1 < u_2$ w.r.t. some arbitrary but fixed ordering, we prune $b$.
\end{rulethm*}

We shall refer to such mapping pairs as being \emph{$H$-conflicting}. Completeness of this rule can now be shown just as for variable symmetry breaking.
However, we can prune branches even earlier, and at the same time reduce the overhead for pruning checks, by considering all possible extensions of a search branch, including implicit mappings $(\bot,u)$ that leave a vertex $u\in V(H)$ unmatched.
We observe that when mapping a branching vertex $v$ to a variable vertex $u_2$ for which a modular symmetrical vertex $u_1\equiv u_2$ has not been mapped yet, all possible extensions of this branch will lead to $H$-conflicts and could thus be pruned.
Here we permit $v_1,v_2\in G\cup\{\bot\}$ and consider $\bot$ to appear last in any branch, for the purpose of assessing $H$-conflicts.
In such a case it makes sense to prune immediately rather than wait until the $H$-conflicts finally manifest, which leads to the following improved pruning rule.

\begin{rulethm}[Value Symmetry Breaking -- Enhanced]\label{Rule: Value Symmetry Breaking}
    When there exists $u_1, u_2, \dots, u_n \in V(H)$ such that $u_1 \equiv u_2 \equiv \dots \equiv u_n$ with an arbitrary but fixed order $u_1 < u_2 < \dots < u_n$, then we always pick $u_1$ for matching and ignore the rest.
\end{rulethm}

By restricting branching to the smallest representative of each symmetry class, the search avoids exploring multiple equivalent mappings differing only by the permutation of symmetric vertices in \( H \). This ensures that redundant permutations are pruned preemptively.

\begin{theorem}
    \label{thm:val-symmetry}
     Value symmetry breaking does not violate the correctness or completeness of the MCIS solution.
\end{theorem}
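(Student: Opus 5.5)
Correctness is immediate, since Rule~\ref{Rule: Value Symmetry Breaking} only removes branches. Every branch that survives is also a branch of the unpruned McSplit search tree, so every mapping it produces still induces isomorphic subgraphs. For completeness I would mirror the swapping argument in the proof of Theorem~\ref{th:var-sym}. The roles of $G$ and $H$ are exchanged, and the termination measure is a lexicographic order on the positions at which $H$-vertices are consumed.

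The first step is the basic swap. Take a maximum mapping $M$ whose branch $b$ is pruned by the enhanced rule. Then there is a pair $(v,u_j)$ in $b$ with $u_j\equiv u_i$, $i<j$, where at that point $u_i$ is still unmatched. The vertex $u_i$ is either mapped later to some $v'$, or it is never mapped, i.e.\ implicitly paired with $\bot$. Define $M'$ by exchanging $u_i$ and $u_j$ in $M$: set $(v,u_i)$, and put $(v',u_j)$ if $u_i$ was mapped to $v'$. The map swapping $u_i$ and $u_j$ is an automorphism of $H$ (the automorphism view of modular symmetry). Hence $M'$ induces an isomorphic common subgraph of the same size, in both the positive and the negative case. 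Theorem~\ref{thm:mutual-exclusiveness} guarantees that each class is uniformly a clique or an anti-clique, so the swap is well defined classwise. I would also check that the McSplit bidomain structure of $M'$ is reachable. The labels of $u_i$ and $u_j$ coincide with respect to any mapping not containing them, so $u_i$ lies in the same bidomain as $u_j$ when $v$ is branched on, and $b'$ is a legitimate branch of the unpruned tree.

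The second step is termination. For a branch, record the sequence of $H$-values in branching order, with unmapped $H$-vertices treated as consumed last. Each swap replaces the value at the earliest violating position by a strictly smaller member of its class under the fixed order $<$, and changes only a later position. So the branch decreases strictly in the value-lexicographical order of Definition~\ref{def:vl-order}, restricted to each symmetry class. Since there are finitely many branches, repeated swapping reaches a branch in which, at each step, the value chosen from a class is the smallest still-unmatched member. That branch is exactly one that Rule~\ref{Rule: Value Symmetry Breaking} keeps, so a maximum mapping is still found.

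The main obstacle is the interaction with the enhanced rule's preemptive pruning. I must show that ``pick the smallest unmatched representative'' is equivalent to the absence of eventual $H$-conflicts once the implicit pairs $(\bot,u)$ are placed last. I would handle this with an induction on the depth of the first violation. The idea is that a swap at the first violating depth never creates a violation at an earlier depth, because the prefix before that depth is unchanged. I would also remark that the bounding function does not interfere: $M$ and $M'$ have equal size, so any bound that does not prune $M$ does not prune $M'$.
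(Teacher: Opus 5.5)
Your proof is correct and follows essentially the paper's route: you swap the $H$-symmetric values (treating a never-mapped $u_i$ as $(\bot,u_i)$ placed last), note that the branch strictly decreases in value-lexicographical order, and conclude by finiteness; your use of the swap automorphism to check that $b'$ is a genuine McSplit branch is a detail the paper leaves implicit. The paper splits the argument into ``basic rule first, then reduce the enhanced rule by appending $(\bot,u_1)$,'' which is exactly the equivalence you call the main obstacle, but that step is immediate: if $v$ takes $u_j$ while a smaller $u_i\equiv u_j$ is unmatched, every terminal extension sends $u_i$ to a later $v'$ or to $\bot$, giving an $H$-conflict. Your direct swap at the pruning node already makes that reduction unnecessary, so no induction on the first violation is needed.
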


\begin{proof}
Let $u_1, u_2 \in V(H)$ with $u_1 \equiv u_2$, $u_1 < u_2$, and $v_1, v_2 \in V(G)$. Suppose the conflicting edges are labeled $(v_1,u_2)$ and $(v_2,u_1)$ for a branch $b$ with mapping $M$. By swapping $(v_1,u_2),(v_2,u_1)$ to $(v_1,u_1),(v_2,u_2)$, we can construct a symmetrical mapping $M'$ with corresponding hypothetical search branch $b'$. We call $b'$ hypothe\-ti\-cal since it may have been pruned due to a different $H$-conflict. However, since search trees branch on all possible mappings of $u$ at the same node, $b'$ must be smaller than $b$ w.r.t. value-lexicographical order. Thus, repeated swapping of $H$-conflicting edges leads to progressively earlier hypothetical branches, eventually reaching an actual unpruned branch. This establishes completeness for the basic rule.
For the enhanced version we permit $v_2=\bot$, consider all terminal branches within the pruned subtree, and explicitly append $(\bot, u_1)$ to the branch if $u_1$ is left unmapped. This ensures that all such branches exhibit $H$-conflicts, and could thus be pruned by the basic rule.
\end{proof}

\begin{example}[Value Symmetry Breaking]
Consider graph \( H \) in \cref{fig:graphs}, vertices \( d \) and \( e \) are symmetric under positive modular symmetry. \Cref{fig:search-tree} illustrates how value symmetry pruning operates on the left side of the search tree. It is important to note that vertices \( 3 \) and \( 6 \) in \( G \) are not symmetric, yet the corresponding branch is still pruned through value symmetry. This shows that the pruning does not require symmetry among variable vertices; it applies whenever symmetric vertices exist in the value graph, without affecting the completeness of the algorithm. In \Cref{fig:search-tree}, the branch corresponding to the mapping \( M_3 = \{(0, a), (3, d), (6, e)\} \) is fully explored. The mapping \( M_5 = \{(0, a), (3, e), (6, d)\} \) is then obtained by swapping the symmetric vertices \( d \) and \( e \) in \( H \). Since both mappings produce identical bidomain partitions, that is \( P_{M_3} = P_{M_5} \), the subtrees rooted at \( M_3 \) and \( M_5 \) are result in isomorphic common subgraphs. Consequently, the subtree under \( M_5 \) can be safely skipped according to the value symmetry pruning rule.
\end{example}

\subsection{Compatibility}

The two symmetry breaking rules can be applied simultaneously during search.
This is straight-forward from an implementation perspective, and completeness of the combined pruning system can be shown using the same value-lexicographical order employed in the individual proofs. This assumes that both rules use the same ordering $<_H$ on $V(H)\cup\{\bot\}$.

\begin{theorem}[Dual Symmetry Breaking]
    \label{thm:dual-symmetry}
    Variable and value symmetry breaking rules can be applied simultaneously without violating the completeness of the algorithm.
\end{theorem}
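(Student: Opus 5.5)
The plan is a minimal-counterexample argument driven by the value-lexicographical order $\prec$ of Definition~\ref{def:vl-order}, with both rules using the same order $<_H$ on $V(H)\cup\{\bot\}$. We work with the basic forms of the two rules, i.e.\ pruning exactly the branches that contain a $G$-conflict or an $H$-conflict. For the enhanced value rule (Rule~\ref{Rule: Value Symmetry Breaking}) we complete each terminal branch by appending the implicit pairs $(\bot,u)$ for unmatched $u\in V(H)$, placed last, as in the proof of Theorem~\ref{thm:val-symmetry}. With this completion, every branch removed by the enhanced rule contains an $H$-conflict, so it suffices to show that some maximum mapping lies on a branch with neither kind of conflict.

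Fix an optimal mapping $M^*$. Among all branches of the unpruned McSplit tree that realize a mapping of size $|M^*|$ symmetrical to $M^*$, choose one, $b$, that is $\prec$-minimal. Such a minimum exists because the tree is finite, so $\prec$ has a minimum on this nonempty set. Suppose $b$ is pruned. Then $b$ contains a conflict, and we split into two cases.

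In the \emph{$G$-conflict} case, $b$ contains $(v_1,u_2)$ and $(v_2,u_1)$ with $v_1\equiv v_2$, $v_1$ before $v_2$, and $u_1<_H u_2$. Swapping the two values gives $(v_1,u_1),(v_2,u_2)$. In the \emph{$H$-conflict} case, $b$ contains the same pattern but with $u_1\equiv u_2$ in $H$ (possibly $v_2=\bot$). Swapping the two $H$-vertices gives $(v_1,u_1),(v_2,u_2)$.

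In each case the swap is induced by the transposition of two modular-symmetric vertices. As noted after Example~\ref{ex:symmetry-example}, such a transposition is an automorphism. Hence the new mapping is still a valid common induced subgraph of the same size and symmetrical to $M^*$. Since the branching order on the variables is unchanged up to the first differing position (more on this below), the mapping is realized by some branch $b'$ of the unpruned tree. The $H$-value sequence of $b'$ agrees with that of $b$ up to $v_1$'s position and then has $u_1<_H u_2$ there, so $b'\prec b$. This contradicts minimality, so $b$ is unpruned.

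The point of a \emph{single} order $\prec$ is that the two rules cannot disagree about which symmetric copy survives. Both kinds of swap strictly decrease $\prec$ at the earliest-mapped position involved. So the conflict scenario in the paper's remark, where one rule discards $M_1$ trusting $M_2$ and the other does the opposite, cannot arise: the $\prec$-least copy is never discarded by either rule.

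The main obstacle is justifying that the swapped mapping is realized by an actual branch $b'$, since McSplit's choice of the next branching variable depends on the current bidomains. Up to position $v_1$ the prefixes coincide, so $b'$ branches on $v_1$ at the same node and can take value $u_1$ instead of $u_2$; $u_1$ lies in the same bidomain because $u_1\equiv u_2$, or, in the $G$-case, because it is the value that $v_2$, a vertex equivalent to $v_1$, legally took. After that the subsequent bidomain partitions are images of those in $b$ under the automorphism, so the variable order can be followed correspondingly, possibly with $v_1,v_2$ relabelled in the $G$-case. I would handle this by proving a short lemma: applying an automorphism of $G$ or of $H$ to a node's mapping maps its bidomain partition bijectively onto that of the image node. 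One would check this lemma most carefully for positive symmetry and for the appended $(\bot,u)$ pairs. The final comparison then only needs the values up to and including $v_1$'s position.
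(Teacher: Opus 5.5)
Your argument is correct and is essentially the paper's proof. Take the $\prec$-minimal branch among those realizing a symmetrical copy of an optimal mapping; any $G$- or $H$-conflict on it would give, by a swap, a strictly $\prec$-smaller branch in the same class, so neither rule can prune it (both rules use the same $<_H$).

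One caveat concerns your realizability paragraph. In the $G$-case, the nodes of $b'$ after $v_1$'s position are not automorphic images of those of $b$: the $H$-side labels now record adjacency to $u_1$ instead of $u_2$, and these need not be related by any automorphism of $H$. So the proposed lemma does not apply there. It is also unnecessary. The unpruned McSplit tree realizes every valid mapping along a path forced by the mapping itself. Hence the branch for the swapped mapping shares $b$'s prefix up to $v_1$ and takes $u_1$ there, which is all the $\prec$-comparison uses.
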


\begin{proof}

The swap operation $(v_1,u_2)(v_2,u_1)\leftrightarrow(v_1,u_1)(v_1,u_2)$ where either $v_1,v_2$ or $u_1,u_2$ are modular symmetric induces an equivalence relation between mappings, which extends to an equivalence relation between terminal branches.
Among the branches within such an equivalence class $E$ let $b\in E$ be the smallest branch w.r.t. value-lexicographical order.
Then neither rule can prune $b$ (else $b$ would not be minimal, as demonstrated during the individual proofs), showing completeness of the combined rule system.
\end{proof}

\subsection{Algorithm Description}
\label{sec:algo}

\Cref{Alg: McSplit+Sym} outlines our proposed approach.
The algorithm begins with the main function \texttt{SymSplit}, which takes two input graphs $G$ and $H$, initializes the global variable $incumbent$ to store the best solution found, and invokes the recursive function \texttt{Search}. The core of the algorithm lies in \texttt{Search}, which explores the search space in depth-first order. At each call, if the current mapping $M$ is larger than the $incumbent$, it is recorded as the new best solution (Lines \ref{ln:best-solution}). The current size of $incumbent$ serves as a lower bound. An upper bound is then estimated based on the sum of minimum sizes over all current bidomains in the partition $P_{M}$. If this upper bound does not exceed the lower bound, the branch is pruned (Lines \ref{ln:pruning}). Otherwise, the algorithm selects a bidomain $\langle V_l, U_l \rangle$ with the smallest maximum size (Line \ref{ln:partition-selection}), and then picks a vertex $v \in V_l$ based on a vertex selection rule (Line \ref{ln:vertex-selection}), such as degree or RL-based method.
McSplit and RRSplit iterate over $U_l$ in descending order of degree, to increase the likelihood of finding a large incumbent early. We extend this to order by symmetry class and then $<_H$ in case of ties to simplify the value-symmetry check in Line~\ref{ln:val-sym-begin}.

For each candidate $w \in U_l$, the algorithm checks for $G$- and $H$-symmetry, and skips to the next vertex upon detection (Lines \ref{ln:var-sym-begin}-\ref{ln:val-sym-end}).
Otherwise, it computes updated bidomains for the new partial match $(v, w)$ (Line \ref{ln:new-partitions-computation}) and recurses on the extended match. Once all candidates in $U_l$ have been explored, the algorithm considers leaving $v$ unmatched by removing it from $V_l$. If this empties the current bidomain, the pair $\langle V_l, U_l \rangle$ is removed entirely (Lines \ref{ln:McSplit+Sym:empty-check}–\ref{ln:McSplit+Sym:empty-remove}) if $V_l$ is empty, and the search proceeds with $v$ left unmatched.

\begin{algorithm}
\caption{Symmetry-Aware BnB Algorithm}\label{Alg: McSplit+Sym}
\DontPrintSemicolon
\SetKwFunction{Main}{SymSplit}
\SetKwFunction{MCSPlitRecurse}{Search}
\SetKwProg{fn}{function}{}{}
\fn{\MCSPlitRecurse{$M, P_{M}$}}{
    \If{$|M| > |\text{incumbent}|$} {
        $\textit{incumbent} \gets M$\; \label{ln:best-solution}
    }
    $UB \gets |M| + \sum_{\langle V_l, U_l \rangle\in P_{GH}} min(|V_l|,|U_l|)$ \;
    \If{$UB \leq |\textit{incumbent}|$} {
        \textbf{return}\; \label{ln:pruning}
    }
    $\langle V_l, U_l \rangle \gets$ Smallest $max(|V_l|, |U_l|)$ over $\langle V_l, U_l \rangle \in P_M$\;\label{ln:partition-selection}
    $v \gets$ Vertex selection using degree \;\label{ln:vertex-selection}
    \For{$u \in U_l$ in order of degree/symmetry-class/$<_H$} {
        \If{$\exists\, (v', u')\in M$ with $v'\equiv v$ and $u <_H u'$\label{ln:var-sym-begin}}{
            \textbf{continue}\label{ln:var-sym-end} \tcc{Variable symmetry}
        }
        \If{$\exists\, u'\in U_l$ with $u'\equiv u$ and $u' <_H u$\label{ln:val-sym-begin}}{
            \textbf{continue}\label{ln:val-sym-end} \tcc{Value symmetry}
        }
        $P_{M'} \gets $ compute new partition bidomains \; \label{ln:new-partitions-computation}
        \MCSPlitRecurse{$M \cup \{(v, u)\}, P_{M'}$} \;
    }
    
    $V_l \gets V_l \setminus \{v\}$ \;
    \If{$|V_l| = 0$\label{ln:McSplit+Sym:empty-check}} {
        $P_{M} \gets P_{M} \setminus \{\langle V_l, U_l \rangle\}$\;\label{ln:McSplit+Sym:empty-remove}
    }
    \MCSPlitRecurse{$M\cup\{(v,\bot)\}, P_{M}$} \;
}

\fn{\Main{$G, H$}}{
    Find modular symmetry classes for $G$ and $H$\;
    \textbf{global} $\textit{incumbent} \gets \varnothing$\;
    \MCSPlitRecurse{$\varnothing, \{\langle V(G), V(H) \rangle\}$}\;
    \textbf{return} $\textit{incumbent}$ \;
}
\end{algorithm}

\subsection{Variants}\label{sec:variants}
In practice, the MCIS problem often involves variants such as graphs with self-loops or directed edges. We discuss below how the proposed symmetry-aware exact search approach extends to these cases.

\medskip
\noindent\textbf{Graphs with Self-Loops.}  
Self-loops can complicate symmetry detection. Consider two adjacent vertices \(v\) and \(u\)
with the same neighbors where only $v$ has a self-loop.
If neighborhoods are treated as sets, \(v\) and \(u\) appear structurally identical (i.e., $N^+(v)=N^+(u)$), yet they are not symmetric due to the presence of the self-loop on \(v\).
Treating neighborhoods as mutisets instead fails for the case where $v,u$ are not adjacent, have the same neighbors and both have self-loops. Then, $N^-(v)\neq N^-(u)$ and $N^+(v)\neq N^+(u)$, so symmetry detection fails.
To resolve this issue, we model self-loops by including a universal neighbor $\mathcal{O}$ in $N^-$ and $N^+$, which acts as a binary indicator to distinguish vertices with self-loops from those without.
Accordingly, we extend Definition~\ref{def:neighborhood} as follows:

\begin{definition}[Neighborhood with Self-Loops]
Let \( G = (V, E) \) possibly contain self-loops. For any \( v \in V \) with a self-loop, we define the negative and positive neighborhoods of \( v \) as,
\begin{itemize}
    \item $N^{-}(v) = \{\, u \in V \setminus \{v\} \mid (u,v) \in E \,\} \cup \{\mathcal{O}\}$,
    \item $N^{+}(v) = N^{-}(v) \cup \{v\}$.
\end{itemize}
Neighborhoods for vertices without self-loops remain unchanged.
\end{definition}

\medskip
\noindent\textbf{Directed Graphs.}  
In directed graphs, neighborhoods become pairs of sets instead of sets, distinguishing between incoming and outgoing neighbors. Here, we record a vertex as both incoming and outgoing neighbor in its positive neighborhood. This allows symmetry detection for non-adjacent vertices using $N^-$, and for neighbors with edges between them in both directions using $N^+$. Vertices with only a single directed edge between them are not symmetric and not detected as such.

We therefore extend Definition~\ref{def:neighborhood} with:

\begin{definition}[Neighborhoods in Directed Graphs]
Let \( G = (V, E) \) be a directed graph. For any \( v \in V \), we define,
\begin{itemize}
    \item $\operatorname{In}(v) = \{\, u \in V \mid (u,v) \in E \,\}$,
    \item $\operatorname{Out}(v) = \{\, u \in V \mid (v,u) \in E \,\}$.
\end{itemize}
The negative and positive neighborhoods of \( v \) are then given by
\begin{itemize}
\item $N^-(v) = \big(\;In(v),\; Out(v)\;\big)$,
\item $N^+(v) = \big(\;In(v)\cup\{v\},\; Out(v)\cup\{v\}\;\big)$.
\end{itemize}
\end{definition}

For directed graphs that may contain self-loops, these definitions can be combined by appending the symbol \( \mathcal{O} \) to \( \operatorname{In}(v) \), \( \operatorname{Out}(v) \), or both, depending on whether a loop at \( v \) is treated as incoming, outgoing, or simultaneously both. These extensions preserve all structural properties needed in our subsequent arguments. 

In addition, mutual exclusiveness (\Cref{thm:mutual-exclusiveness}) remains valid, as does the characterization of modular symmetry via graph automorphisms.

\begin{lemma}
Two vertices \( u, v \in V \) are modular symmetric if and only if the permutation that swaps \( u \) and \( v \) is an automorphism of \( G \).
\end{lemma}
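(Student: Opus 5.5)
I prove both directions directly from the definitions, working first with simple undirected graphs and then checking that the same argument carries over to self-loops and directed edges. Throughout, $\sigma$ denotes the transposition of $u$ and $v$, with $u\neq v$. To say $\sigma$ is an automorphism means $(x,y)\in E \iff (\sigma(x),\sigma(y))\in E$ for all $x,y$. Since $\sigma$ fixes every vertex other than $u,v$, only pairs involving $u$ or $v$ need checking. There are three types: $(u,w)$ with $w\notin\{u,v\}$, $(v,w)$ likewise, and the pair $(u,v)$ itself (together with loops $(u,u),(v,v)$ when these are allowed).

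\emph{($\Rightarrow$).} Suppose $N^-(u)=N^-(v)$. For $w\notin\{u,v\}$ we then have $w\in N^-(u)\iff w\in N^-(v)$, so $(u,w)\in E\iff(v,w)\in E=(\sigma u,\sigma w)$. The pair $(u,v)$ is mapped to $(v,u)$, which is the same edge in the undirected case. Now suppose instead $N^+(u)=N^+(v)$. Restricted to $V\setminus\{u,v\}$ these sets agree with the negative neighborhoods, so the same argument works. The pair $(u,v)$ again maps to itself. (In fact $v\in N^+(v)=N^+(u)$ forces $u,v$ to be adjacent, though this is not even needed.)

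\emph{($\Leftarrow$).} Suppose $\sigma$ is an automorphism. For $w\notin\{u,v\}$, $(u,w)\in E\iff(v,w)\in E$, so $N^-(u)\setminus\{u,v\}=N^-(v)\setminus\{u,v\}$. Now split on whether $(u,v)\in E$. If not, then $v\notin N^-(u)$ and $u\notin N^-(v)$. With no loops we also have $u\notin N^-(u)$, hence $N^-(u)=N^-(v)$. If $(u,v)\in E$, then $v\in N^-(u)$ and $u\in N^-(v)$, so adding the vertex itself gives $N^+(u)=N^+(v)$. Either way $u\equiv v$.

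The main obstacle is the bookkeeping for the variants, where the pair $(u,v)$ and the loops are not automatically preserved.

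\textbf{Self-loops.} An automorphism $\sigma$ maps $(u,u)$ to $(v,v)$, so $u$ has a loop iff $v$ does. The marker $\mathcal{O}$ then appears in both or neither neighborhood, and the case split above goes through unchanged. Conversely, equality of $N^\pm$ forces $\mathcal{O}$ to be present on both or neither, so loops are preserved by $\sigma$. The definition excludes $v$ itself from $N^-(v)$ apart from $\mathcal{O}$, so loops do not break the "$u\notin N^-(u)$" step.

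\textbf{Directed graphs.} $\sigma$ maps $(u,v)$ to $(v,u)$. Being an automorphism therefore forces either both arcs between $u$ and $v$ or neither. Comparing $\operatorname{In}$ and $\operatorname{Out}$ componentwise with the same two-case split gives $N^-$-equality when there is no arc and $N^+$-equality when both arcs are present, since in the latter case $v\in\operatorname{In}(u)\cap\operatorname{Out}(u)$. Conversely, $N^-$-equality rules out any arc between $u$ and $v$, because e.g. $v\in\operatorname{Out}(u)$ would give $v\in\operatorname{Out}(v)$, a loop that the paper encodes by $\mathcal{O}$ rather than by $v$ itself. Similarly, $N^+$-equality forces both arcs. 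Edges to all other vertices are preserved componentwise, and loops are handled as above.
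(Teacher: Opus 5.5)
Your proof is correct. The paper never proves this lemma. It only asserts it in the Variants subsection, after an informal remark in the Modular Symmetry subsection that the automorphism view is equivalent, so there is no competing argument to compare yours against. Your direct verification fills that gap, and it is the natural argument: reduce to pairs involving $u$ or $v$, obtain agreement of the neighborhoods on $V\setminus\{u,v\}$ from the transposition, then split on whether $u$ and $v$ are adjacent to land in the $N^-$ or the $N^+$ case.

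Two small points.

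First, you assume $u\neq v$. The case $u=v$ is trivial (the swap is the identity and $N^-(u)=N^-(v)$), but you should dismiss it explicitly.

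Second, and more substantively, for directed graphs with loops the paper says to \emph{append} $\mathcal{O}$ to $\operatorname{In}(v)$ and/or $\operatorname{Out}(v)$. Read literally, a looped $v$ keeps $v\in\operatorname{In}(v)$, and then the lemma is false. Take two non-adjacent looped vertices $u,v$ with identical other neighbors. The swap is an automorphism, yet $u\in\operatorname{In}(u)\setminus\operatorname{In}(v)$, so neither $N^-$ nor $N^+$ coincide. Your argument relies on the reading in which $v$ is removed from its own in- and out-sets and replaced by $\mathcal{O}$. That reading matches the undirected self-loop definition, which uses $V\setminus\{v\}$. You state this reading and it is the right one, but say explicitly that the lemma depends on it.
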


\subsection{Complexity Analysis}
In this section, we discuss the data structures and implementation details used to understand the SymSplit algorithm efficiently.

\medskip
\noindent\textbf{Symmetry Identification.} Fast symmetry identification is crucial to our algorithm's performance; otherwise, the overhead of symmetry detection negates the benefits of symmetry breaking. Our symmetry detection mechanism addresses two key requirements:
\begin{itemize}
    \item Determine whether a given vertex has symmetric counterparts
    \item If symmetries exist, identify all vertices in the same symmetry class
\end{itemize}
We introduce a novel data structure to track vertex symmetries: an integer array of size $n$ (the number of vertices). Each entry stores a symmetry class identifier, enabling $O(1)$ symmetry class lookups. Non-symmetric vertices are assigned unique identifiers, making symmetry detection also an $O(1)$ operation. The space complexity of this data structure is $O(n)$.

\medskip
\noindent\textbf{Symmetry Computation.} Efficient symmetry computation is essential, as any computational overhead directly impacts the overall performance gains. We emphasize that in our experiments, symmetry computation time is included in the total MCIS solving time, making efficiency critical.
\begin{lemma}
\label{lemma:time-complexity}
The time complexity of symmetry detection is $O(n^2)$, determined by adjacency matrix traversal and hash operations.
\end{lemma}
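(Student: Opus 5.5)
The plan is to give an explicit algorithm and bound each phase. The algorithm hashes every vertex's negative and positive neighborhood and groups vertices with equal hashes; correctness then follows directly from Definition~\ref{def:modular-sym}. We assume $G$ is stored as an $n\times n$ adjacency matrix, so each row is a bit vector of length $n$. The same procedure is run on $H$, which adds only a constant factor.

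\textbf{Step 1 (signatures).} For each vertex $v$, read row $v$ of the adjacency matrix to obtain the characteristic vector of $N^-(v)$. Setting bit $v$ in a copy gives the characteristic vector of $N^+(v)$. For graphs with self-loops, one extra bit for $\mathcal{O}$ is appended. For directed graphs, the row and the column are concatenated into a vector of length $2n$, following the pair-of-sets definition. Computing a hash of each vector (for example, a polynomial hash over the bits) takes $O(n)$ time, so this step costs $O(n^2)$ over all $n$ vertices.

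\textbf{Step 2 (grouping).} Insert all $n$ negative-neighborhood hashes into one hash table and all $n$ positive-neighborhood hashes into a second table. Vertices falling into the same bucket are candidate members of one symmetry class. Hash collisions are resolved by comparing the actual vectors in $O(n)$ time. If we assume no adversarial collisions, verification costs $O(n)$ per vertex against its bucket's representative, giving $O(n^2)$ in total. By Theorem~\ref{thm:mutual-exclusiveness}, each vertex belongs to at most one non-trivial class, either a negative one or a positive one. The two groupings therefore never conflict and can be merged into a single array of class identifiers. Vertices left as singletons receive unique identifiers. Building this array takes $O(n)$ time.

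\textbf{Main obstacle.} The hardest part is justifying the hash step rigorously. A class may contain many vertices. If every vertex were compared with every other, equality checks would cost $O(n)$ each and $O(n^3)$ overall in the worst case. The key point is that, because modular symmetry is an equivalence relation (it is transitive, via the automorphism characterization in the preceding lemma), each vertex only needs to be compared with one representative of its bucket. This brings the cost down to $O(n)$ per vertex. Under the standard assumption of expected $O(1)$ hash-table operations, or by using a deterministic alternative such as a radix sort of the $n$ bit vectors (total length $n^2$, hence $O(n^2)$ time), the overall bound is $O(n^2)$. That bound is dominated by the traversal of the adjacency matrix, as the lemma states.
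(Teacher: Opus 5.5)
Your proposal is correct and takes the same route as the paper: hash each vertex's neighborhood in $O(n)$ time, for $O(n^2)$ overall (dominated by the adjacency-matrix traversal), then group vertices through a hash map in $O(n)$ additional time. The one difference is in your favor: the paper simply assumes the hash is injective, noting that the Boost hash happens to be collision-free on its datasets, whereas your equality check against bucket representatives, or the deterministic radix-sort alternative, makes the $O(n^2)$ bound hold without that assumption.
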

A na\"ive approach would compare neighborhoods for all vertex pairs, requiring $O(n^3)$ time: $O(n^2)$ to compute each vertex's neighborhood, followed by pairwise comparisons. Instead, we employ a hash-based approach. For each vertex, we compute a hash value of its neighborhood using an injective hash function designed to avoid collisions on our datasets. Computing neighborhoods for all vertices requires $O(n^2)$ time. We then store these hash values in a hash map with the hash as the key and the vertex as the value. Grouping vertices by their hash values to construct the symmetry class data structure requires only $O(n)$ additional time. Since the neighborhood computation dominates the overall complexity, the total time complexity for symmetry detection is $O(n^2)$. Injective hash functions are theoretical constructs and difficult to achieve in practice when small hash values are required. However, we observe that the hash function provided by the Boost library \cite{boostlib2011} preserves injectivity across all datasets, due to hash values being sufficiently large.

\section{Experiments}\label{sec:experiments}

For our experiments, we use a Linux 13th Gen Intel(R) Core(TM) i9-13900K server with 32 CPUs and 128GB main memory. We implement the proposed algorithm in C++ similar to rest of the baselines. We compile all the algorithms using g++ version 11.4.0 with C++17.

In the experiments, we follow the same setup that has been used in all previous works of McSplit and its extensions. We give each graph pair in the dataset 1800 seconds to solve the MCIS problem. Then, we consider all the cases that cannot be solved withing 1800 seconds as \emph{hard} cases. The cases that can be solved by all the solvers within 10 seconds are considered as \emph{easy} cases and the all the cases that can be solved within 10-1800 seconds are considered \emph{moderate}. Also, note that the time taken to compute the symmetries is also included for the analysis. The implementation of our proposed algorithm can be found at \href{https://github.com/mcsolver/symsplit}{https://github.com/mcsolver/symsplit}.

\medskip
\noindent\textbf{Datasets.~}We used the following datasets in our experiments, all of which are publicly available\footnote{\url{https://perso.liris.cnrs.fr/christine.solnon/SIP.html}}. 

\begin{itemize}
    \item \textbf{BI} - Biochemical dataset \citep{2014BiochemicalReactionsDataset} contains 136 bipartite graphs with 9 to 386 vertices, representing biochemical reaction networks from \url{biomodels.net}.
    \item \textbf{CV} - The \textit{Images-CVIU11} dataset \cite{damiand2011CVIU11Dataset} includes 43 pattern graphs (22 to 151 vertices) and 146 target graphs (1,072 to 5,972 vertices), yielding 6,278 graph pairs.
    \item \textbf{LV} - The \textit{LV} dataset \cite{mccreesh2017partitioning} provides 2,352 graph pairs based on 49 pattern graphs and 48 target graphs, with vertex counts ranging from 10 to 6,671. The \textit{LargerLV} dataset contains 3,430 pairs using the same 49 pattern graphs (10 to 128 vertices) combined with 70 target graphs (138 to 6,671 vertices).
    \item \textbf{SI} - The \textit{SI} dataset \cite{zampelli2010ScaleAndSIDatasets1,zampelli2010ScaleAndSIDatasets2} includes 1,170 graph pairs, with target graphs ranging from 200 to 1,296 vertices and pattern graphs comprising 20\% to 60\% of the corresponding target graph.
\end{itemize}

\begin{table}[ht]
\centering
\caption{Percentage of vertices in modular symmetry classes.}
\vspace{-0.2cm}
\label{tab:symmetry-percentages}
\begin{tabular}{c c c}
\hline
\textbf{Dataset} & \textbf{Symmetry Percentage} & \textbf{Standard Deviation} \\
\hline
BI & 44.40\% & 21.12\% \\
LV & 21.13\% & 30.95\% \\
CV & 6.64\% & 11.36\% \\
SI & 2.60\% & 6.32\% \\
\hline
\end{tabular}
\end{table}

\medskip
\noindent\textbf{Baseline algorithms.~}In~\cite{yu2025rrsplit}, the authors report that the original McSplit algorithm~\cite{mccreesh2017partitioning} is outperformed by its reinforcement learning based extensions McSplit+RL~\cite{liu2020mcsplit+rl}, McSplit+LL~\cite{zhou2022mcsplit+ll}, and McSplit+DAL~\cite{liu2023mcsplit+dal}. Currently, the RRSplit algorithm outperforms both McSplit+DAL and the degree sequence bound~\cite{dsb2025} approaches (further details are provided in the ablation study; see Section \ref{sec:ablation}).
Therefore we compare our proposed SymSplit algorithm against the state-of-the-art RRSplit~\cite{yu2025rrsplit} algorithm.

\subsection{Performance Comparison}
In this section, we evaluate our proposed method using two key metrics: (1) \textbf{completion time}, the total time in seconds to complete all enumerations; and (2) \textbf{branch count}, the number of recursive calls made during the search.

\medskip
\noindent\textbf{Overall Performance Across Datasets.} Here we analyse the overall results of RRSplit and SymSplit over all the datasets. After removing all the easy instances from the evaluation, we see that the proposed SymSplit algorithm solves 43.98\% more instances than the RRSplit algorithm, which demonstrates the superiority of the proposed symmetry breaking technique over existing methods. We must point out that for all the datasets, SymSplit solves more cases—there is no dataset where RRSplit solves more cases. In terms of completion time, SymSplit achieves an average speedup of $75.7\times$ over RRSplit, with a maximum speedup of $1.12\times10^4$, which showcases the speedup improvement that we can gain from the proposed algorithm.

\begin{figure}[ht]
\centering
\captionsetup[subfigure]{format=hang, margin=0pt, justification=centerlast, singlelinecheck=false, skip=0pt, belowskip=10pt}
\begin{subfigure}[t]{0.48\columnwidth}
    \centering
    \includegraphics{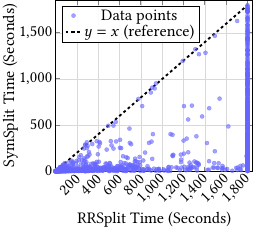}
    \caption{BI}
\end{subfigure}
\hfill
\begin{subfigure}[t]{0.48\columnwidth}
    \centering
    \includegraphics{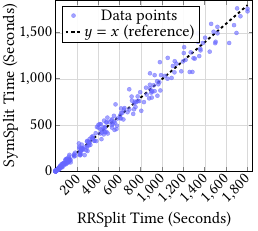}
    \caption{CV}
\end{subfigure}
\medskip
\begin{subfigure}[t]{0.48\columnwidth}
    \centering
    \includegraphics{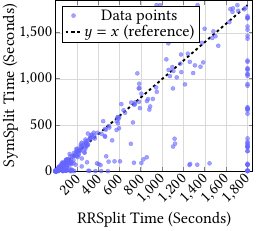}
    \caption{LV}
\end{subfigure}
\hfill
\begin{subfigure}[t]{0.48\columnwidth}
    \centering
    \includegraphics{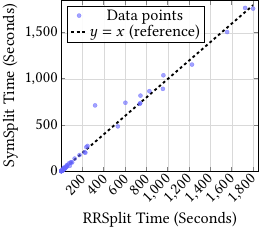}
    \caption{SI}
\end{subfigure}
\vspace{-0.4cm}
\caption{Comparing completion times across all datasets for instances solved by at least one method.}
\label{fig:completion-time-scatter}
\Description{Comparing completion times across all datasets for instances solved by at least one method.}
\end{figure}

\begin{figure}[ht]
\centering
\captionsetup[subfigure]{format=hang, margin=0pt, justification=centerlast, singlelinecheck=false, skip=0pt, belowskip=5pt}
\begin{subfigure}[t]{0.48\columnwidth}
    \centering
    \includegraphics{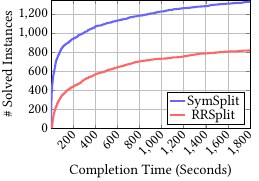}
    \caption{BI}
\end{subfigure}
\hfill
\begin{subfigure}[t]{0.48\columnwidth}
    \centering
    \includegraphics{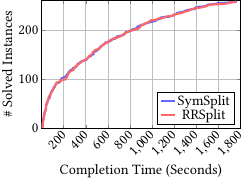}
    \caption{CV}
\end{subfigure}
\medskip
\begin{subfigure}[t]{0.48\columnwidth}
    \centering
    \includegraphics{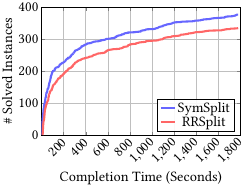}
    \caption{LV}
\end{subfigure}
\hfill
\begin{subfigure}[t]{0.48\columnwidth}
    \centering
    \includegraphics{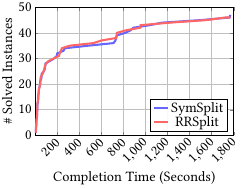}
    \caption{SI}
\end{subfigure}
\vspace{-0.4cm}
\caption{Cumulative distribution of solved instances across all datasets with respect to completion time.}
\label{fig:completion-time-cumulative}
\Description{Analysis time of Sym}
\end{figure}


\medskip
\noindent\textbf{Completion Time.} \Cref{fig:completion-time-cumulative} compares the number of solved cases by RRSplit and SymSplit with respect to completion time bounds.
The plots clearly show that for the BI dataset and LV datasets, SymSplit has clear improvements, solving more instances than RRSplit for (almost) any fixed time bound. For SI and CV datasets the improvement is marginal. This difference can be clearly explained when we analyze \cref{tab:symmetry-percentages}, where BI and LV datasets have higher percentages of modular symmetric vertices than the SI and CV datasets. \Cref{fig:completion-time-scatter} depicts the scatter plots of completion time of RRSplit versus SymSplit. All the points below the black dotted line of $y=x$ indicate cases where SymSplit takes less time, while points above the line are the ones where RRSplit performs better. We see that in BI and LV datasets the majority of points are below the line, and in fact nearly all the points in BI are below the line, which demonstrates the superior performance of SymSplit. As expected due to less symmetry for the CV and SI datasets, points are scattered along the $y=x$ lines.

\begin{figure}[ht]
\centering
\captionsetup[subfigure]{format=hang, margin=0pt, justification=centerlast, singlelinecheck=false, skip=0pt, belowskip=7pt}
\begin{subfigure}[t]{0.48\columnwidth}
    \centering
    \includegraphics{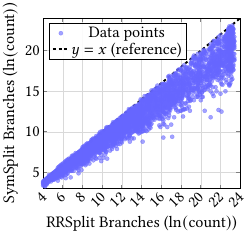}
    \caption{BI}
\end{subfigure}
\hfill
\begin{subfigure}[t]{0.48\columnwidth}
    \centering
    \includegraphics{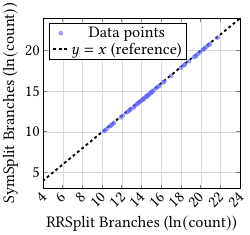}
    \caption{CV}
\end{subfigure}
\medskip
\begin{subfigure}[t]{0.48\columnwidth}
    \centering
    \includegraphics{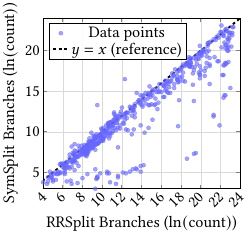}
    \caption{LV}
\end{subfigure}
\hfill
\begin{subfigure}[t]{0.48\columnwidth}
    \centering
    \includegraphics{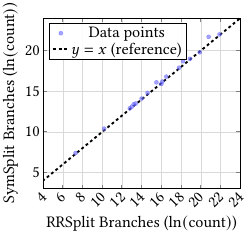}
    \caption{SI}
\end{subfigure}
\captionsetup{skip=-7pt,belowskip=7pt}
\caption{Comparing branches across all datasets for instances solved by both methods.}
\label{fig:completion-branches-scatter}
\Description{Comparing branches across all datasets for instances solved by both methods.}
\end{figure}

\begin{figure}[ht]
\captionsetup[subfigure]{format=hang, margin=0pt, justification=centerlast, singlelinecheck=false, skip=0pt, belowskip=7pt}
\centering
\begin{subfigure}[t]{0.48\columnwidth}
    \centering
    \includegraphics{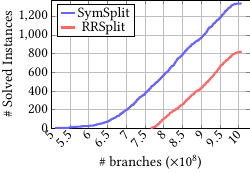}
    \caption{BI}
\end{subfigure}
\hfill
\begin{subfigure}[t]{0.48\columnwidth}
    \centering
    \includegraphics{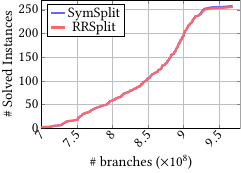}
    \caption{CV}
\end{subfigure}
\medskip
\begin{subfigure}[t]{0.48\columnwidth}
    \centering
    \includegraphics{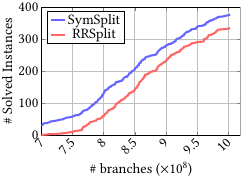}
    \caption{LV}
\end{subfigure}
\hfill
\begin{subfigure}[t]{0.48\columnwidth}
    \centering
    \includegraphics{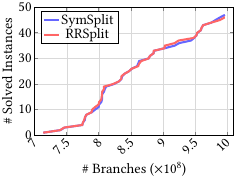}
    \caption{SI}
\end{subfigure}
\captionsetup{skip=-7pt}
\caption{Cumulative distribution of solved instances 
with respect to number of recursive calls.}
\label{fig:completion-branches-cumulative}
\Description{Cumulative distribution of solved instances across all datasets with respect to number of recursive calls.}
\end{figure}

\medskip
\noindent\textbf{Branch Count.} \Cref{fig:completion-branches-cumulative} compares the number of solved cases using bounds on the number of recursive calls, rather than completion time.
Following the completion time plots, BI and LV plots show clear improvements in solving more instances within a fixed number of branches, demonstrating the superior performance of pruning in the proposed SymSplit algorithm over the RRSplit algorithm. Following up with the previous time analysis, the SI and CV datasets show marginal improvements due to their lack of symmetries. A similar observation can be made when analysing \cref{fig:completion-branches-scatter}. The y-axis represents the number of branches taken by SymSplit while the x-axis represents the branches taken by RRSplit, both in logarithmic scale to improve visibility. In all the plots we see that nearly all the points are below the $y=x$ line except a few handful of points, which demonstrates the improvement of SymSplit over RRSplit.
We note that all symmetries identified by RRSplit are also identified by SymSplit. However, RRSplit also employs a \emph{vertex-equivalence-based upper bound}, which can result in additional bound-based pruning.

\medskip
\noindent\textbf{Locating MCIS.} Similar to the completion time, which indicates the time to completely solve the problem, and the branch count, which is the number of recursive calls to complete the search, we have investigated similar yet important measures:
\begin{itemize}
    \item Solution time: the time taken to locate the maximum common subgraph
    \item Solution branch count: the number of recursive calls to locate the maximum common subgraph
\end{itemize}
Note that even after locating this common subgraph, we have to complete all enumerations to make sure there are no larger common subgraphs. 
Locating this maximum common subgraph early is crucial because it establishes the lower bound to prune the unexplored branches if they cannot yield a better solution. Having a better lower bound ensures more pruning and leads to fast completion of the search. Upon further investigation, we note that the solution time and solution branch count both follow a similar performance improvement over all the datasets, demonstrating the superior performance of SymSplit over RRSplit.

\medskip
\noindent\textbf{Discussion.} RRSplit has some performance improvement over a few instances compared to our SymSplit. There are several factors contributing to that deviation. First, the proposed SymSplit has to compute symmetries for both graphs $G$ and $H$, which takes more time as we consider the time taken to compute the symmetries in the above evaluation. Second, during the search, we check for symmetry breaking opportunities in each iteration; however, if there are fewer opportunities, the time consumed for that check diminishes our performance. Third, their proposed method consists of a novel upper bound that contributes to pruning more branches, which provides them a slight advantage over us. We further analyse these impacts in our ablation study by applying the proposed symmetry breaking on top of RRSplit to showcase the improvement gains of each component.

\subsection{Further Analysis}
Beyond the traditional evaluation metrics such as completion time and branch count used in previous works, in this section we further analyze the performance of the SymSplit algorithm across several additional factors: (1) SymSplit performance on hard instances, (2) computational speedup gains on solved instances, and (3) effectiveness of symmetry pruning compared to upper bound-based pruning.

\begin{table}[ht]
\centering
\caption{Larger common subgraph rates for RRSplit and SymSplit with hard case statistics.}
\label{tab:hard-cases}
\begin{tabular}{l|r|r|rr}
\toprule
\multirow{2}{*}{Dataset} & \multirow{2}{*}{\shortstack{No. Hard\\Cases}} & \multirow{2}{*}{\shortstack{Non-Equal\\Rate (\%)}} & \multicolumn{2}{c}{\shortstack{Larger Common \\Subgraph Rate} (\%)} \\
\cmidrule(lr){4-5}
& & & SymSplit & RRSplit \\
\midrule
BI & 950 & 58.53 & 100.0 & 0.0 \\
CV & 4949 & 1.39 & 84.06 & 15.94 \\
LV & 4193 & 12.85 & 84.79 & 15.21 \\
SI & 731 & 3.56 & 92.31 & 7.69 \\
\midrule
All & 10823 & 11.0 & 92.02 & 7.98\\
\bottomrule
\end{tabular}
\end{table}

\begin{figure}[ht]
\centering
\resizebox{0.8\columnwidth}{!}{\includegraphics{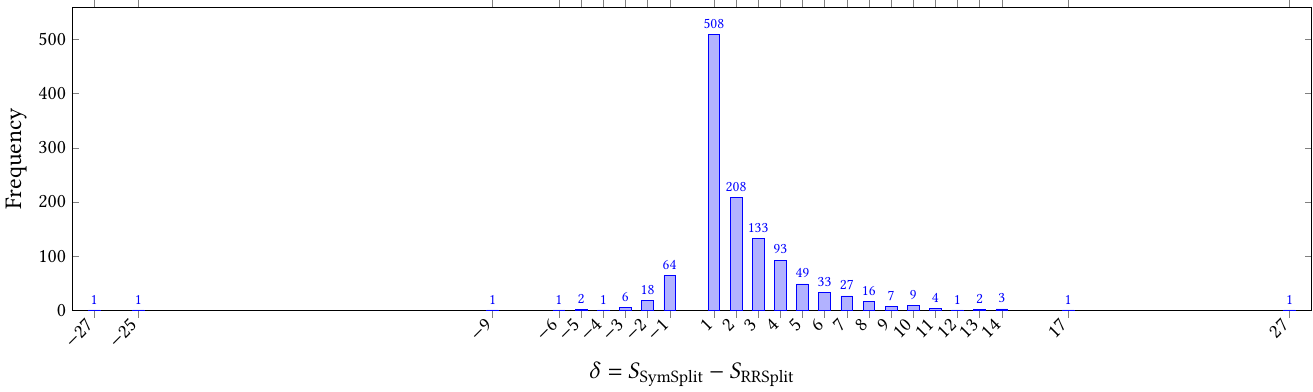}}
\caption{Frequency distribution of $\delta$ values.}
\Description{Frequency distribution of $\delta$ values.}
\label{fig:delta-frequencies}
\end{figure}

\medskip
\noindent\textbf{Hard Instances.} 
Instances that are not solved by RRSplit or SymSplit within the time limit are considered hard cases. Even when exact solutions are not found, identifying a larger common subgraph within the available time can still provide useful approximations and insights for real-world applications \citep{marini2005exact}. In these hard instances, there are cases where both RRSplit and SymSplit find equal-sized common subgraphs, and cases where one algorithm finds a larger common subgraph than the other.

\Cref{tab:hard-cases} first shows the number of hard cases in each dataset: BI has 950 graph pairs as hard instances, CV has 4,949, LV has 4,193, and SI has 731. Among these instances, many return equal-sized common subgraphs for both RRSplit and SymSplit. The non-equal percentages are shown in the next column: 58.53\% of hard cases in BI return different-sized larger common subgraphs between RRSplit and SymSplit, as do 1.39\% of hard instances in CV, 12.85\% in LV, and 3.56\% in SI.

\Cref{tab:hard-cases} then compares these non-equal hard instances using the measure called \emph{Larger Common Subgraph Rate}, which reflects the percentage of cases where RRSplit or SymSplit finds a larger common subgraph than the other, excluding equal cases. We see that in more than 84\% of these cases, the SymSplit algorithm finds a better common subgraph than RRSplit across all datasets. In particular, in the BI dataset, SymSplit always finds a better common subgraph while RRSplit does not outperform in a single case. For the CV dataset, the larger common subgraph rate of SymSplit is 84.06\%; for LV it is 84.79\%; and for SI it is 92.31\%. This showcases the applicability of SymSplit over RRSplit for more difficult problem instances of MCIS and for finding reasonably good approximations.

We further analyze the gaps in these unequal hard instances by computing $\delta = S_{\text{SymSplit}} - S_{\text{RRSplit}}$, where $S_{\text{SymSplit}}$ is the common subgraph size returned by SymSplit and $S_{\text{RRSplit}}$ is the common subgraph size returned by RRSplit for a given instance. A positive $\delta$ indicates that SymSplit returns a better subgraph, while a negative $\delta$ indicates otherwise. Our initial expectation was that this delta value should be close to 1. Even finding a subgraph with just one additional vertex contributes significantly to pruning more branches, as it acts as the lower bound. Finding subgraphs with more than one additional matching is extremely difficult in NP-hard combinatorial problems. However, to our surprise, when we compute the mean of these delta values across all datasets, we obtain a mean of 2.17 with a standard deviation of 2.75. This shows that the larger subgraphs found by SymSplit do not contain just one additional vertex pair, but can more values than 2.

The \Cref{fig:delta-frequencies} shows the $\delta$ frequencies across the all the hard instances in all datasets. As expected majority of the cases have the difference of 1. There are very few cases in the negative side which shows the cases where RRSplit has a larger subgraph than SymSplit. However, the majority is in the positive side showing the effectiveness of SymSplit in finding larger common subgraphs. Unexpectedly, in the positive side also we have large distribution of values with values ranging from 1-14 and some outliers with 17 and 27 with only one instance. Overall, this showcases that SymSplit is able to find larger common subgraphs in many hard instances even with the gap of more than 1.

\begin{table}[!ht]
    \centering
    \caption{Comparing average and maximum speedup achieved by SymSplit over RRSplit algorithm.}
    \label{tab:speedup}
    \begin{tabular}{@{}l|c|c@{}}
        \toprule
        Dataset & Average Speedup & Maximum Speedup \\
        \midrule
        BI     & $\times 51.1$ & $\times 8.06e3$ \\
        LV     & $\times 253$ & $\times 1.12e4$ \\
        CV     & $\times 1.12$ & $\times 1.72$ \\
        SI     & $\times 1.09$ & $\times 1.37$ \\
        \midrule
        All    & $\times 75.7$ & $\times1.12e4$ \\
        \bottomrule
    \end{tabular}
\end{table}

\medskip
\noindent\textbf{Computational Speedup.~}\Cref{tab:speedup} presents the average and maximum speedups achieved by SymSplit over RRSplit across all datasets. As expected, the CV and SI datasets yielded the lowest speedups due to limited symmetry availability. Nevertheless, the proposed method achieved modest improvements, with average speedups of $1.09\times$ (SI) and $1.12\times$ (CV), and maximum speedups reaching $1.37\times$ and $1.72\times$, respectively. While SymSplit demonstrated the largest and most consistent performance improvements on the BI dataset in terms of completion time, branching, and solved instances (as shown in figures~\ref{fig:completion-time-scatter}-\ref{fig:completion-branches-cumulative}), the LV dataset achieved the highest speedup gains: an average speedup of $253\times$ and a maximum speedup of $11,200\times$. In comparison, the BI dataset yielded an average speedup of $51.5$ and a maximum of $8,060$.

This counterintuitive result can be explained by the symmetry distribution characteristics. Although the LV dataset has lower average symmetry than BI, it exhibits substantially higher standard deviation in symmetry. Consequently, LV contains a wider range of graph structures, including instances with exceptionally high symmetries alongside others with minimal or no symmetry. The presence of these highly symmetric instances enables the dramatic speedup gains observed with SymSplit on the LV dataset.

\begin{table}[h]
\centering
\begin{tabular}{l |c | c | c}
\hline
Dataset & Mean & Stdev & High sym pruning cases (\%) \\
\hline
BI & 26.0 & 19.31 & 0.29 \\
LV & 149.57 & 1068.74 & 8.99 \\
SI & 0.04 & 0.33 & 0 \\
CV & 0.17 & 0.09 & 0 \\
\midrule
All & 49.15 & 544.65 & 2.43
\end{tabular}
\caption{Effectiveness in pruning dataset wise}
\Description{Effectiveness in pruning dataset wise}
\label{tab:pruning-effectiveness}
\end{table}

\medskip
\noindent\textbf{Effectiveness of Pruning Strategies.}
Our proposed method integrates two pruning strategies: a bounding rule inherited from McSplit and a novel symmetry-breaking framework. To assess their respective contributions, we analyzed the number of branches pruned by each approach across all instances. As shown in \Cref{tab:pruning-effectiveness}, on average, for every $100$ branches pruned by the bounding rule, the symmetry-breaking rule prunes an additional $49.15$ branches, with a standard deviation of $\sigma = 544.65$. This suggests that symmetry breaking eliminates approximately one branch for every two pruned by bounding, highlighting its significant impact. Notably, in the LV dataset, symmetry breaking achieves even greater effectiveness, removing $149.67$ branches on average for every $100$ pruned by bound-based pruning.

The high standard deviation ($\sigma = 544.65$) indicates substantial variability in the relative effectiveness of the two strategies. In certain cases, symmetry pruning eliminates far more branches than bounding-rule pruning. To quantify this, we compare the percentage of instances where symmetry pruning surpasses bound-based pruning, referred to as \emph{high symmetry pruning cases}. We observe that in 8.99\% of cases in the LV dataset, symmetry pruning is more effective than bound-based pruning. In the BI dataset, this figure is only 0.29\%. Overall, symmetry breaking outperforms bounding in $2.43\%$ of all instances. Notably, the LV dataset sees more branches pruned via symmetry-breaking than via bound-based pruning. The previously discussed high standard deviation in the LV dataset’s symmetry distribution further explains this observation.

\begin{figure*}[ht]
\begin{minipage}[t]{0.47\textwidth}
\includegraphics{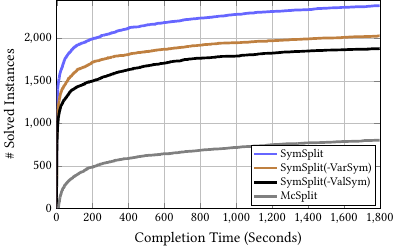}
\caption{Comparison of SymSplit against its individual components, using only variable symmetry breaking or only value symmetry breaking.}
\label{fig:ablation-symSplit}
\Description{Analysis time of Sym}
\end{minipage}
\hfill
\begin{minipage}[t]{0.47\textwidth}
\includegraphics{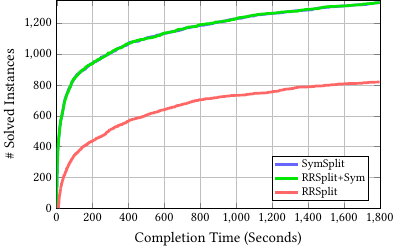}
\caption{Performance comparison of RRSplit, RRSplit+Sym, and SymSplit by number of solved instances over completion time. The SymSplit and RRSplit+Sym curves overlap due to nearly identical performance.}
\label{fig:ablation-rrsplit}
\Description{Analysis time of Sym}
\end{minipage}
\end{figure*}

\subsection{Ablation Study}\label{sec:ablation}
In this section, we analyze the effectiveness of each component in detail to demonstrate their individual contributions to the overall algorithm performance.

\medskip
\noindent\textbf{Variable and Value Symmetry Breaking.~}\Cref{fig:ablation-symSplit} shows the effectiveness of the final algorithm with each symmetry breaking component that we proposed. The y-axis of the plot shows the number of cases solved and the x-axis shows the completion time in seconds. Here we analyze three algorithms:
\begin{itemize}
    \item SymSplit: the proposed algorithm with both variable and value symmetry breaking rules
    \item SymSplit(-VarSym): the proposed algorithm SymSplit without the variable symmetry breaking rule
    \item SymSplit(-ValSym): the proposed SymSplit algorithm without the value symmetry breaking rule
    \item McSplit: the McSplit algorithm which does not include variable or value symmetries.
\end{itemize}

As expected, the SymSplit algorithm performs better than the other variants. By analyzing \cref{fig:ablation-symSplit}, we can see that the drop in the number of cases solved by both SymSplit(-VarSym) and SymSplit(-ValSym) demonstrates the effectiveness of variable and value symmetry breaking individually.

We also observe that SymSplit(-ValSym) solves fewer cases than SymSplit(-VarSym), which indicates that value symmetry breaking has more impact than variable symmetry breaking. However, this difference does not diminish the importance of variable symmetry breaking, as removing it also significantly reduces the number of cases solved. These results demonstrate the effectiveness of incorporating both variable and value symmetry breaking rules in the algorithm. Even when comparing with the McSplit algorithm without both variable and value symmetry breaking we see how much impact that we can gain from both value and variable symmetry breaking together.

\medskip
\noindent\textbf{Comparison with RRSplit.~}The proposed symmetry breaking can also be used as a plug-in for existing algorithms without violating completeness or correctness. \Cref{fig:ablation-rrsplit} compares three algorithms:
\begin{itemize}
    \item RRSplit: the algorithm proposed by \cite{yu2025rrsplit} 
    \item RRSplit+Sym: the proposed symmetry breaking as a plug-in for the RRSplit algorithm
    \item SymSplit: the proposed symmetry breaking algorithm
\end{itemize}
We compare the number of solved cases on the y-axis and time in seconds on the x-axis. The results clearly show the performance gain obtained by applying the proposed symmetry breaking on top of the RRSplit algorithm, demonstrating the superiority of the proposed approach. \Cref{fig:ablation-rrsplit} shows that RRSplit+Sym and SymSplit perform nearly identically. The only difference between RRSplit+Sym and SymSplit is the vertex equivalence-based bound. When both variable and value symmetry breaking are active, RRSplit's specialized upper bound provides minimal additional benefit. This suggests complete symmetry breaking may reduce the need for complex bounding heuristics. We therefore implement SymSplit without RRSplit's upper bound to reduce implementation complexity. Users requiring maximum performance on specific graph classes may benefit from combining all techniques.

\begin{figure}[ht]
\includegraphics{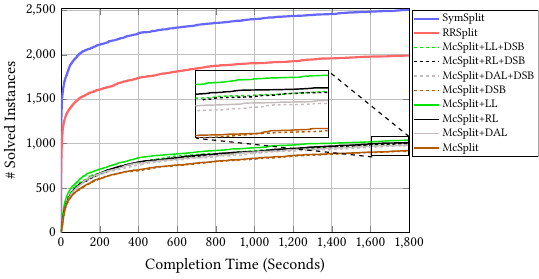}
\caption{
Performance comparison of SymSplit and RRSplit with the McSplit and its RL-based extensions (McSplit, McSplit+RL, McSplit+LL, and McSplit+DAL).}
\label{fig:ablation-rl-dsb}
\Description{Analysis time of Sym}
\end{figure}

\medskip
\noindent\textbf{Comparison with RL and DSB Extensions.~} 
In \Cref{fig:ablation-rl-dsb}, we compare the performance of SymSplit and RRSplit against McSplit, its RL extensions (McSplit+RL, McSplit+LL, and McSplit+DAL), and their respective DSB extensions. We observe that, compared to McSplit, the RL extensions offer some performance improvement. However, this improvement is extremely small relative to the performance gains achieved by the symmetry breaking techniques in RRSplit and SymSplit. This highlights the scale of the performance advantage provided by symmetry breaking techniques.

\section{Related Work}
The MCIS problem has inspired a range of different algorithms. 
Existing work can broadly be classified into algorithmic and reinforcement 
learning-based improvements to the McSplit framework. 
\Cref{Tab: Related Works} summarizes representative algorithms from the 
literature and their key capabilities: variable symmetry exploitation, value 
symmetry exploitation, custom bound computation, learning-based vertex 
selection, and exactness guarantees. To the best of our knowledge, our approach is the first MCIS-specific exact method to simultaneously exploit symmetries in both input graphs. While dual-side symmetry breaking has been studied in the constraint programming literature for related problems, our contribution lies in applying this principle within an exact maximum common induced subgraph solver.

\begin{table}[ht]
\centering
\caption{Summary of MCIS algorithms.}
\small
\begin{tabular}{@{}lccccc@{}}
\toprule
\textbf{Method} & \textbf{Var.} & \textbf{Val.} & \textbf{Custom} & \textbf{Learn.} & \textbf{Exact} \\
                & \textbf{Symmetry} & \textbf{Symmetry} & \textbf{Bound} & \textbf{Based} & \textbf{} \\
\midrule
Exhaustive \cite{akutsu2012complexity}        & \xmark & \xmark & \xmark & \xmark & \cmark \\
McSplit \cite{mccreesh2017partitioning}        & \xmark & \xmark & \xmark & \xmark & \cmark \\
RL \cite{liu2020mcsplit+rl, zhou2022mcsplit+ll, liu2023mcsplit+dal}     & \xmark & \xmark & \xmark & \cmark & \cmark \\
DSB \cite{dsb2025}     & \xmark & \xmark & \cmark & \cmark & \cmark \\
Trummer \cite{trummer2021engineering}       & \cmark & \cmark & \xmark & \xmark & \xmark \\
RRSplit \cite{yu2025rrsplit}      & \cmark & \xmark & \cmark & \xmark & \cmark \\
\hline
\textbf{SymSplit} & \cmark & \cmark & \xmark & \xmark & \cmark \\
\bottomrule
\end{tabular}
\label{Tab: Related Works}
\end{table}

\vspace{0.15cm}
\noindent\textbf{Algorithmic Improvements.~}
The foundational work by Akutsu and Tamura \cite{akutsu2012complexity} provides a theoretical analysis of the maximum common subgraph problem and several of its variants. The authors establish NP-hardness results for multiple problem formulations, including the maximum common connected edge subgraph and maximum common connected induced subgraph problems, even under certain bounded-degree conditions. While this work establishes important complexity-theoretic foundations, the proposed algorithm employs a na\"ive enumeration approach: it exhaustively enumerates all subgraphs in both $G$ and $H$, then tests each pair for isomorphism. This enumeration-based strategy is fundamentally inefficient compared to modern branch-and-bound methods that employ advanced pruning.

The McSplit algorithm~\cite{mccreesh2017partitioning} introduced an efficient branch-and-bound framework that partitions unmatched vertices of the two graphs into bidomains based on structural compatibility. Its upper bound, defined in Equation~\ref{Eqn: McSplitBound}, enables aggressive pruning while preserving completeness. Building on this foundation, Yu et~al.~\cite{yu2025rrsplit} proposed the RRSplit algorithm, which refines McSplit by introducing vertex-equivalence-based reductions and a tighter upper bound that accounts for structurally equivalent vertices. These improvements eliminate redundant exploration of isomorphic subtrees and strengthen pruning. However, RRSplit’s equivalence reasoning is confined to local vertex redundancy within the variable graph and does not account for symmetries in the value graph $H$.

Following a similar line of research, Trummer et al.~\cite{trummer2021engineering} investigated the application of symmetry breaking in the McSplit framework by exploiting graph automorphisms. While their method effectively reduced redundant exploration in practice, it introduced approximations that could compromise the exactness of the final solution. As such, designing sound and complete symmetry-breaking techniques for the MCIS problem remains an open challenge.

\vspace{0.15cm}
\noindent\textbf{Reinforcement Learning-Based Improvements.} Recent learning-based extensions of McSplit have focused on improving the branching strategy to enhance search efficiency. Liu et al.~\cite{liu2020mcsplit+rl} proposed McSplit+RL, which replaces McSplit’s hand-crafted branching rule with a reinforcement learning (RL) policy that prioritizes vertex pairs based on their expected contribution to reducing the upper bound defined in Equation~\ref{Eqn: McSplitBound}. Two value functions, $S_p(v)$ for vertices in $V(G)$ and $S_t(u)$ for vertices in $V(H)$, are learned to guide the selection of vertex pairs that are likely to yield the largest bound reduction.

Building on this idea, Zhou et al.~\cite{zhou2022mcsplit+ll} observed that the accumulated reward history in McSplit+RL could introduce bias during learning. They proposed McSplit+LL, which separates short-term rewards for selecting a vertex $v \in V(G)$ from long-term rewards for mapping $v$ to a vertex $u \in V(H)$. This distinction improves learning stability and matching quality. They further introduced a mechanism for jointly mapping multiple leaf vertices that exhibit equivalent structural patterns, thereby reducing search redundancy without compromising optimality.
Liu et al.~\cite{liu2023mcsplit+dal} extended this framework by integrating structural information from the bidomain partitioning $P_M^k$ into the reward function. Unlike earlier methods that focused exclusively on minimizing the upper bound, their Domain Action Learning (DAL) approach promotes selections that reduce the number of bidomains $|P_M^k|$, encouraging broader and more diverse exploration of the search space. To combine the advantages of both strategies, they proposed a hybrid model that balances bound-focused and structure-aware learning. 

More recently, the Degree Sequence Bound (DSB) framework~\cite{dsb2025} introduced a stronger upper bound for McSplit and its RL-based extensions by exploiting degree sequence information. Although this bound is computationally more demanding, it provides a tighter estimate of the remaining search potential. To offset the additional cost, the authors employed an RL model that selectively applies the bound only in promising regions of the search space, achieving a balance between theoretical precision and computational efficiency.

\begin{figure}[ht]
    \centering
    \includegraphics{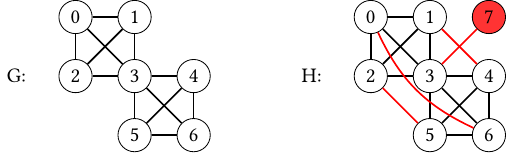}
    \caption{Example graphs $G$ and $H$ differentiating partial graph alignment (PGA) from the MCIS problem.}
    \label{fig:pga}
    \Description{PGA Example}
\end{figure}

\vspace{0.15cm}
\noindent\textbf{Other Graph-Matching Problems:~} Instead of MCIS, other similarity metrics are used to compare two graphs, though the resulting problems can be fundamentally different.
One such problem is the subgraph isomorphism (SI) problem, which attempts to find a pattern graph $G$ inside a target graph $H$. SI has two variants: one seeks an induced subgraph of $H$ isomorphic to $G$, while the other seeks a subgraph of $H$ (not necessarily induced) isomorphic to $G$ \cite{hoffmann2017simethod}. On the other hand, the graph alignment problem seeks a vertex-to-vertex correspondence between the vertices of $G$ and $H$ that maximizes edge overlap. When the sizes of $G$ and $H$ are equal, the problem is called the graph alignment problem; otherwise, it is called the partial graph alignment (PGA) problem \cite{petsinis2025alpine}.

The MCIS problem differs from these in two major ways.
First, the SI problem and the graph alignment problem both have a predefined pattern graph to be matched against the target graph, whereas the MCIS problem does not have such predetermined graphs; instead, both subgraphs in MCIS are determined during the search. Additionally, while the graph alignment problem aims to maximize alignment, the MCIS problem must satisfy isomorphism at each step of the search, with the goal of maximizing the size of the isomorphic subgraphs. The following example show the difference between the partial graph alignment and the MCIS problem.

\begin{example}
Consider graphs $G$ and $H$ in Figure~\ref{fig:pga}, where $G$ is a subgraph of $H$ with differences highlighted in red. Any optimal solution to PGA would leave vertex $7$ in $H$ unmatched. However, a MCIS between $G$ and $H'=H\setminus\{7\}$ contains only $4$ vertices, while a MCIS between $G$ and $H$ is of size $5$.
\end{example}

\section{Conclusion}\label{sec:conclusion}

This paper addresses the challenge of efficiently reducing redundant exploration in branch-and-bound algorithms by exploiting structural symmetries in the Maximum Common Induced Subgraph (MCIS) problem. While symmetry breaking has been widely used in combinatorial search, it has received limited attention for the MCIS problem. We propose a polynomial-time symmetry-breaking framework that identifies and prunes symmetric subtrees during the search process. Unlike previous methods, our approach simultaneously accounts for both variable and value symmetries, ensuring a more comprehensive elimination of redundant branches. Experimental results demonstrate that our approach consistently reduces computation time and the number of recursive calls, achieving substantial improvements on standard benchmarks.

\section{Limitation and Future Work}

The proposed symmetry-breaking rules effectively prune redundant search branches by identifying those that have lexicographically smaller symmetrical counterparts. However, they do not eliminate all such cases and may still produce multiple mappings belonging to the same symmetry class.

\begin{example}
Consider the following two search branches:
\begin{align*}
b &= (v_0,u_1)(v_1,u_0)(v_2,u_2),\\
b' &= (v_0,u_0)(v_1,u_2)(v_2,u_1),
\end{align*}
where \( v_0 \equiv v_2 \), \( u_0 \equiv u_2 \), and \( u_0 < u_1 < u_2 \).
The two branches are symmetrical yet remain conflict-free under the current pruning rules.
\end{example}

Detecting such cases can be achieved with a single pass over the sequence of \((v_i, u_i)\) pairs in branch order, and is only needed when both newly matched vertices belong to non-trivial equivalence classes. In practice, these scenarios occur infrequently, so extending the pruning rules to capture them is unlikely to yield substantial performance gains. 

For future work, we plan to more tightly integrate symmetry reasoning into the search process, potentially combining symmetry detection with reinforcement learning. Such integration could enable adaptive pruning strategies that further reduce redundant exploration while preserving exact solutions. Additionally, applying similar symmetry-breaking rules to other combinatorial search problems is a promising direction. 
\begin{acks}

The authors are grateful to Professor Qing Wang for insightful discussions and valuable feedback. This work was supported in part by the Australian Research Council (Discovery Project DP210102273). We also thank the anonymous reviewers for their constructive suggestions, which improved the quality of this paper. Generative AI tools, including Claude AI and Grammarly, were used solely for text editing and figure generation; all intellectual contributions are the original work of the authors.

\end{acks}

\bibliographystyle{ACM-Reference-Format}
\bibliography{sample-base}

@String{Computing = "Computing" }

@String{Computer = "{IEEE} Computer" }

@String{Springer = "Springer-Verlag" }

@inproceedings{mccreesh2017partitioning,
author = {McCreesh, Ciaran and Prosser, Patrick and Trimble, James},
title = {A partitioning algorithm for maximum common subgraph problems},
year = {2017},
isbn = {9780999241103},
publisher = {AAAI Press},
booktitle = {Proceedings of the 26th International Joint Conference on Artificial Intelligence},
pages = {712–719},
numpages = {8},
location = {Melbourne, Australia},
address = {Melbourne, Australia},
series = {IJCAI'17}
}

@article{liu2020mcsplit+rl,
author = {Liu, Yanli and Li, Chu-Min and Jiang, Hua and He, Kun},
year = {2020},
month = {04},
pages = {2392-2399},
title = {A Learning Based Branch and Bound for Maximum Common Subgraph Related Problems},
volume = {34},
journal = {Proceedings of the AAAI Conference on Artificial Intelligence},
doi = {10.1609/aaai.v34i03.5619}
}

@inproceedings{zhou2022mcsplit+ll,
  title     = {A Strengthened Branch and Bound Algorithm for the Maximum Common (Connected) Subgraph Problem},
  author    = {Zhou, Jianrong and He, Kun and Zheng, Jiongzhi and Li, Chu-Min and Liu, Yanli},
  booktitle = {Proceedings of the Thirty-First International Joint Conference on
               Artificial Intelligence, {IJCAI-22}},
  publisher = {International Joint Conferences on Artificial Intelligence Organization},
  editor    = {Lud De Raedt},
  pages     = {1908--1914},
  year      = {2022},
  month     = {7},
  note      = {Main Track},
  doi       = {10.24963/ijcai.2022/265},
  url       = {https://doi.org/10.24963/ijcai.2022/265},
  address   = {Messe Wien, Vienna, Austria},
}

@article{2014BiochemicalReactionsDataset,
title = {On the subgraph epimorphism problem},
journal = {Discrete Applied Mathematics},
volume = {162},
pages = {214-228},
year = {2014},
issn = {0166-218X},
doi = {https://doi.org/10.1016/j.dam.2013.08.008},
url = {https://www.sciencedirect.com/science/article/pii/S0166218X13003491},
author = {Steven Gay and François Fages and Thierry Martinez and Sylvain Soliman and Christine Solnon}
}

@article{damiand2011CVIU11Dataset,
title = {Polynomial algorithms for subisomorphism of nD open combinatorial maps},
journal = {Computer Vision and Image Understanding},
volume = {115},
number = {7},
pages = {996-1010},
year = {2011},
note = {Special issue on Graph-Based Representations in Computer Vision},
issn = {1077-3142},
doi = {https://doi.org/10.1016/j.cviu.2010.12.013},
url = {https://www.sciencedirect.com/science/article/pii/S1077314211000816},
author = {Guillaume Damiand and Christine Solnon and Colin {de la Higuera} and Jean-Christophe Janodet and Émilie Samuel}
}

@article{zampelli2010ScaleAndSIDatasets1,
author = {Zampelli, Stéphane and Deville, Yves and Solnon, Christine},
year = {2010},
month = {07},
pages = {327-353},
title = {Solving subgraph isomorphism problems with constraint programming},
volume = {15},
journal = {Constraints},
doi = {10.1007/s10601-009-9074-3}
}

@article{zampelli2010ScaleAndSIDatasets2,
title = {AllDifferent-based filtering for subgraph isomorphism},
journal = {Artificial Intelligence},
volume = {174},
number = {12},
pages = {850-864},
year = {2010},
issn = {0004-3702},
doi = {https://doi.org/10.1016/j.artint.2010.05.002},
url = {https://www.sciencedirect.com/science/article/pii/S0004370210000718},
author = {Christine Solnon}
}

@article{liu2023mcsplit+dal, 
title={Hybrid Learning with New Value Function for the Maximum Common Induced Subgraph Problem}, 
volume={37}, url={https://ojs.aaai.org/index.php/AAAI/article/view/25519}, 
DOI={10.1609/aaai.v37i4.25519}, 
abstractNote={Maximum Common Induced Subgraph (MCIS) is an important NP-hard problem with wide real-world applications. An efficient class of MCIS algorithms uses Branch-and-Bound (BnB), consisting in successively selecting vertices to match and pruning when it is discovered that a solution better than the best solution found so far does not exist. The method of selecting the vertices to match is essential for the performance of BnB. In this paper, we propose a new value function and a hybrid selection strategy used in reinforcement learning to define a new vertex selection method, and propose a new BnB algorithm, called McSplitDAL, for MCIS. Extensive experiments show that McSplitDAL significantly improves the current best BnB algorithms, McSplit+LL and McSplit+RL. An empirical analysis is also performed to illustrate why the new value function and the hybrid selection strategy are effective.}, 
number={4}, 
journal={Proceedings of the AAAI Conference on Artificial Intelligence}, 
author={Liu, Yanli and Zhao, Jiming and Li, Chu-Min and Jiang, Hua and He, Kun}, 
year={2023}, month={Jun.}, pages={4044-4051} }

@article{kawabata2014chemapplication,
author = {Kawabata, Takeshi and Nakamura, Haruki},
title = {3D Flexible Alignment Using 2D Maximum Common Substructure: Dependence of Prediction Accuracy on Target-Reference Chemical Similarity},
journal = {Journal of Chemical Information and Modeling},
volume = {54},
number = {7},
pages = {1850-1863},
year = {2014},
doi = {10.1021/ci500006d},
    note ={PMID: 24895842},
URL = { 
        https://doi.org/10.1021/ci500006d  
},
eprint = { https://doi.org/10.1021/ci500006d}
}

@InProceedings{ndiaye2011cpmethod,
author="Ndiaye, Samba Ndojh
and Solnon, Christine",
editor="Lee, Jimmy",
title="CP Models for Maximum Common Subgraph Problems",
booktitle="Principles and Practice of Constraint Programming -- CP 2011",
year="2011",
publisher="Springer Berlin Heidelberg",
address="Berlin, Heidelberg",
pages="637--644",
isbn="978-3-642-23786-7"
}

@article{hoffmann2017simethod, 
title={Between Subgraph Isomorphism and Maximum Common Subgraph}, 
volume={31}, url={https://ojs.aaai.org/index.php/AAAI/article/view/11137}, DOI={10.1609/aaai.v31i1.11137}, 
abstractNote={ &lt;p&gt; When a small pattern graph does not occur inside a larger target graph, we can ask how to find &quot;as much of the pattern as possible&quot; inside the target graph. In general, this is known as the maximum common subgraph problem, which is much more computationally challenging in practice than subgraph isomorphism. We introduce a restricted alternative, where we ask if all but k vertices from the pattern can be found in the target graph. This allows for the development of slightly weakened forms of certain invariants from subgraph isomorphism which are based upon degree and number of paths.  We show that when k is small, weakening the invariants still retains much of their effectiveness. We are then able to solve this problem on the standard problem instances used to benchmark subgraph isomorphism algorithms, despite these instances being too large for current maximum common subgraph algorithms to handle. Finally, by iteratively increasing k, we obtain an algorithm which is also competitive for the maximum common subgraph&lt;br /&gt;&lt;br /&gt; &lt;/p&gt; }, 
number={1}, journal={Proceedings of the AAAI Conference on Artificial Intelligence}, 
author={Hoffmann, Ruth and McCreesh, Ciaran and Reilly, Craig}, 
year={2017}, 
month={02}
}

@article{mcgregor1982backtracksa,
  title={Backtrack search algorithms and the maximal common subgraph problem},
  author={McGregor, James J},
  journal={Software: Practice and Experience},
  volume={12},
  number={1},
  pages={23--34},
  year={1982},
  publisher={Wiley Online Library}
}

@InProceedings{vismara2008cpmethod,
author="Vismara, Philippe
and Valery, Beno{\^i}t",
editor="Le Thi, Hoai An
and Bouvry, Pascal
and Pham Dinh, Tao",
title="Finding Maximum Common Connected Subgraphs Using Clique Detection or Constraint Satisfaction Algorithms",
booktitle="Modelling, Computation and Optimization in Information Systems and Management Sciences",
year="2008",
publisher="Springer Berlin Heidelberg",
address="Berlin, Heidelberg",
pages="358--368",
isbn="978-3-540-87477-5"
}

@article{levi1973note,
  title={A note on the derivation of maximal common subgraphs of two directed or undirected graphs},
  author={Levi, Giorgio},
  journal={Calcolo},
  volume={9},
  number={4},
  pages={341--352},
  year={1973}
}

@article{balas1986finding,
  title={Finding a maximum clique in an arbitrary graph},
  author={Balas, Egon and Yu, Chang Sung},
  journal={SIAM Journal on Computing},
  volume={15},
  number={4},
  pages={1054--1068},
  year={1986}
}

@article{koch2001enumerating,
  title={Enumerating all connected maximal common subgraphs in two graphs},
  author={Koch, Ina},
  journal={Theoretical Computer Science},
  volume={250},
  number={1-2},
  pages={1--30},
  year={2001}
}

@article{ehrlich2011chemanalisis,
author = {Ehrlich, Hans-Christian and Rarey, Matthias},
title = {Maximum common subgraph isomorphism algorithms and their applications in molecular science: a review},
journal = {WIREs Computational Molecular Science},
volume = {1},
number = {1},
pages = {68-79},
year = {2011}
}

@inproceedings{fahle2001symmetry,
author = {Fahle, Torsten and Schamberger, Stefan and Sellmann, Meinolf},
title = {Symmetry Breaking},
year = {2001},
isbn = {3540428631},
publisher = {Springer-Verlag},
address = {Berlin, Heidelberg},
booktitle = {Proceedings of the 7th International Conference on Principles and Practice of Constraint Programming},
pages = {93–107},
numpages = {15},
series = {CP '01}
}

@phdthesis{trummer2021engineering,
  title={Engineering Maximum Common Subgraph Algorithms for Large Graphs},
  author={Trummer, Jonathan},
  year={2021},
  school={University of Vienna}
}

@InProceedings{akutsu2012complexity,
author="Akutsu, Tatsuya
and Tamura, Takeyuki",
editor="Chao, Kun-Mao
and Hsu, Tsan-sheng
and Lee, Der-Tsai",
title="On the Complexity of the Maximum Common Subgraph Problem for Partial k-Trees of Bounded Degree",
booktitle="Algorithms and Computation",
year="2012",
publisher="Springer Berlin Heidelberg",
address="Berlin, Heidelberg",
pages="146--155",
isbn="978-3-642-35261-4"
}

@InProceedings{marini2005exact,
author="Marini, Simone
and Spagnuolo, Michela
and Falcidieno, Bianca",
editor="Brun, Luc
and Vento, Mario",
title="From Exact to Approximate Maximum Common Subgraph",
booktitle="Graph-Based Representations in Pattern Recognition",
year="2005",
publisher="Springer Berlin Heidelberg",
address="Berlin, Heidelberg",
pages="263--272",
isbn="978-3-540-31988-7"
}

@inproceedings{lahiri2008mining-network,
author = {Lahiri, Mayank and Berger-Wolf, Tanya Y.},
title = {Mining Periodic Behavior in Dynamic Social Networks},
year = {2008},
isbn = {9780769535029},
publisher = {IEEE Computer Society},
address = {USA},
url = {https://doi.org/10.1109/ICDM.2008.104},
doi = {10.1109/ICDM.2008.104},
booktitle = {Proceedings of the 2008 Eighth IEEE International Conference on Data Mining},
pages = {373–382},
numpages = {10},
series = {ICDM '08}
}

@article{vijayalakshmi2011performance-network,
  title={Performance monitoring of large communication networks using maximum common subgraphs},
  author={Vijayalakshmi, R and Nadarajan, R and Nirmala, P and Thilaga, M},
  journal={International Journal of Artificial Intelligence},
  volume={6},
  number={S11},
  pages={72--86},
  year={2011}
}

@article{yu2025rrsplit,
  title={Fast Maximum Common Subgraph Search: A Redundancy-Reduced Backtracking Approach},
  author={Yu, Kaiqiang and Wang, Kaixin and Long, Cheng and Lakshmanan, Laks and Cheng, Reynold},
  journal={Proceedings of the ACM on Management of Data},
  volume={3},
  number={3},
  pages={1--27},
  year={2025},
  publisher={ACM New York, NY, USA}
}

@InProceedings{dsb2025,
  author =	{Kothalawala, Buddhi W. and Koehler, Henning and Wang, Qing},
  title =	{{Learning to Bound for Maximum Common Subgraph Algorithms}},
  booktitle =	{31st International Conference on Principles and Practice of Constraint Programming (CP 2025)},
  pages =	{22:1--22:18},
  series =	{Leibniz International Proceedings in Informatics (LIPIcs)},
  ISBN =	{978-3-95977-380-5},
  ISSN =	{1868-8969},
  year =	{2025},
  volume =	{340},
  editor =	{de la Banda, Maria Garcia},
  publisher =	{Schloss Dagstuhl -- Leibniz-Zentrum f{\"u}r Informatik},
  address =	{Dagstuhl, Germany},
  URL =		{https://drops.dagstuhl.de/entities/document/10.4230/LIPIcs.CP.2025.22},
  URN =		{urn:nbn:de:0030-drops-238837},
  doi =		{10.4230/LIPIcs.CP.2025.22}
}

@book{boostlib2011,
author = {Schling, Boris},
title = {The Boost C++ Libraries},
year = {2011},
isbn = {0982219199},
publisher = {XML Press}
}

@inproceedings{petsinis2025alpine,
author = {Petsinis, Petros and Skitsas, Konstantinos and Ranu, Sayan and Mottin, Davide and Karras, Panagiotis},
title = {Alpine: Partial Unlabeled Graph Alignment},
year = {2025},
isbn = {9798400714542},
publisher = {Association for Computing Machinery},
address = {New York, NY, USA},
url = {https://doi.org/10.1145/3711896.3736839},
doi = {10.1145/3711896.3736839},
booktitle = {Proceedings of the 31st ACM SIGKDD Conference on Knowledge Discovery and Data Mining V.2},
pages = {2315–2325},
numpages = {11},
location = {Toronto ON, Canada},
series = {KDD '25}
}

@inbook{zampelli2007symmetry,
author = {Zampelli, Stéphane and Deville, Yves and Dupont, Pierre},
publisher = {John Wiley \& Sons, Ltd},
isbn = {9780470612309},
title = {Symmetry Breaking in Subgraph Pattern Matching},
booktitle = {Trends in Constraint Programming},
chapter = {10},
pages = {203-218},
doi = {https://doi.org/10.1002/9780470612309.ch10},
url = {https://onlinelibrary.wiley.com/doi/abs/10.1002/9780470612309.ch10},
eprint = {https://onlinelibrary.wiley.com/doi/pdf/10.1002/9780470612309.ch10},
year = {2007}
}

\end{document}